\newtheorem{thm}{Theorem}[section]
\newtheorem{lm}{Lemma}[section]
\newtheorem{prop}{Proposition}[section]
\newtheorem{cor}{Corollary}[section]
\theoremstyle{definition}
\newtheorem{re}{Remark}[section]
\newtheorem{ex}{Example}[section]
\newcommand{\F}{\mathbb{F}}
\newcommand{\Z}{\mathbb{Z}}
\newcommand{\C}{\mathcal{C}}
\newcommand{\D}{\mathcal{D}}
\newcommand{\x}{\textbf{x}}
\newcommand{\y}{\textbf{y}}
\title{Expanding self-orthogonal codes over a ring $\Z_4$ to self-dual codes and unimodular lattices }
\author{ Minjia Shi\thanks{smjwcl.good@163.com},
	Sihui Tao\thanks{taosihui2022@163.com}, Jihoon Hong\thanks{rjekfl@sogang.ac.kr}, Jon-Lark Kim\thanks{jlkim@sogang.ac.kr}
	\thanks{Minjia Shi and Sihui Tao are with the Key Laboratory of Intelligent Computing Signal Processing, Ministry of Education, School of Mathematical Sciences, Anhui University, Hefei, Anhui 230601, China. State Key Laboratory of Integrated Service Networks, Xidian University, Xi'an,
		710071, China. Jihoon Hong and Jon-Lark Kim are with the Department of Mathematics, Sogang University, Seoul, South Korea. This research (M. Shi) is supported by National Natural Science Foundation of China (12471490). This research (J.-L. Kim and J. Hong) is partially supported by the 4th BK 21 ''Nurturing team for creative and convergent mathematical science talents'' of Department of Math at Sogang University.}}
\date{}
\newcommand{\Rmnum}[1]{\expandafter\@slowromancap\romannumeral #1@}
\begin{document}
	\maketitle
	\begin{abstract}
		Self-dual codes have been studied actively because they are connected with mathematical structures including block designs and lattices and have practical applications in quantum error-correcting codes and secret sharing schemes.
		Nevertheless, there has been less attention to construct self-dual codes from self-orthogonal codes with smaller dimensions. Hence, the main purpose of this paper is to propose a way to expand any self-orthogonal code over a ring $\Z_4$ to many self-dual codes over $\Z_4$.
We show that all self-dual codes over $\Z_4$ of lengths $4$ to $8$ can be constructed this way. Furthermore, we have found five new self-dual codes over $\Z_4$ of lengths $27, 28, 29, 33,$ and $34$ with the highest
Euclidean weight $12$. Moreover, using Construction $A$ applied to our new Euclidean-optimal self-dual codes over $\Z_4$, we have constructed a new odd extremal unimodular lattice in dimension 34 whose kissing number was not previously known.

	\end{abstract}
	\textbf{Keywords:} self-orthogonal code, maximal code \\
	\textbf{MSC(2020):} 94 B15
	
	\section{Introduction}\label{Introduction}

	A linear code is called {\em self-dual} if it is equal to its dual and {\em self-orthogonal} if it is contained in its dual. Since the beginning of Coding Theory, self-dual codes have been actively studied. Some well-known examples include the binary extended Hamming code of length 8 and the binary extended Golay code of length 24.
	Self-dual codes play an important role in Coding Theory and have  applications in various mathematical structures such as block designs~\cite{design,design1}, lattices~\cite{Lattices}, modular forms~\cite{modular forms}, and sphere packings~\cite{Sphere Packings}.
	The classification problem of self-dual codes up to equivalence is one of the difficult problems. Researchers have been interested in classifying self-dual codes with different lengths $n$ according to their equivalence \cite{classfication of sd code,Fields}.

	On the other hand, self-orthogonal codes have been less studied and more difficult to classify them because there are many self-orthogonal codes of a given length and with various dimensions. Freibert and Kim~\cite{FreKim} studied a chain of self-orthogonal subcodes of a fixed self-dual code such that the minimum distance of each self-orthogonal subcode is as large as possible, which are called {\em optimum distance profiles (ODP) of a self-dual code}. The authors of \cite{FreKim} gave the ODP of Type II self-dual codes of length up to 24 and the five extremal Type II codes of length 32, and proposed a partial result of the ODP of the extended quadratic residue code of length 48.
	
	There are other related ways to construct self-dual codes from various ways.
	One of the famous constructions is the gluing theory~\cite{PleSlo}, which produces self-dual codes by the direct sum of several self-orthogonal codes $\C_1, \C_2, ..., \C_t$ over a finite field. Kim~\cite{Kim} gave a systematic method called the building-up construction in order to construct self-dual codes from a shorter length self-dual codes.
	Recently, Kim et al.~\cite{KimKimLee} constructed good self-orthogonal codes from good linear codes by the embedding method. Then there has been active work to construct good self-orthogonal codes~\cite{KimCho},~\cite{ShiLiHelKim}.
	However, not many theories have been discussed on how to construct self-dual codes from self-orthogonal codes.

 Durğun  \cite{max s.o code} also proposed an approach for constructing a self-dual code denoted as $\C_{sd}$ from a maximal self-orthogonal code $\C_m$ over the finite field $GF(q)$.
This process specifically entails augmenting the generator matrix of $\C_m$ with an additional row and column.
Moreover,  it was shown \cite{max s.o code} that every self-orthogonal code over a finite field can be expanded to a self-dual code over the field. Hence we propose a method to construct self-dual codes from a given self-orthogonal code over a ring $\Z_4$ as a next step since linear codes over $\Z_4$ have been actively studied.

	\medskip

	In this paper, we study the opposite direction of ODP. More precisely, we start from a self-orthogonal code over $\Z_4$, augment a vector to produce a larger self-orthogonal code containing the given code, and repeat this process until we get a self-dual code over $\Z_4$ including all the intermediate self-orthogonal codes.
	This approach is theoretically interesting and connects a given self-orthogonal to various self-dual codes.

	We first propose a way to expand any self-orthogonal code $\C$ over $\Z_4$ to a self-dual code $\C_{sd}$ over $\Z_4$. Especially, whenever $\C$ is free, we can expand $\C$ into a self-dual code with type $4^{k_1}2^{k_2}$ with largest possible $k_1$.
	 Several examples listed in Sections 3 and 4 are self-dual codes over $\Z_4$ with the largest Lee or Euclidean minimum distance. Moreover, using Construction $A$ applied to our new Euclidean-optimal self-dual codes over $\Z_4$, we have constructed a new odd extremal unimodular lattice in dimension 34 whose kissing number was not previously known.
	
	This paper is organized as follows.
	In Section 2, we help readers recall some basic definitions and concepts which are used throughout the paper. In Section 3, we prove that any self-orthogonal code over $\Z_4$ can be expanded into a self-dual code over $\Z_4$ under some conditions. In Section 4, we construct new self-dual codes  over $\Z_4$ of length $27, 28, 29, 33,$ and $34$ with the highest Euclidean weight $12$ by considering expanding self-orthogonal codes over $\Z_4$. We construct odd extremal unimodular lattices in dimensions $27, 28, 29, 33,$ and $34$.
	In Section 5, we conclude the whole paper. In the appendix, we give three algorithms to expand any self-orthogonal code to many self-dual codes over $\Z_4$ with the given type.

	\section{Preliminaries}
	
	In this section, we provide basic definitions and notations. Readers can refer to
\cite{HufKimSol},~\cite{HufPle},~\cite{JoyKim},~\cite{MacWilliams}.

Let $\mathbb Z_4$ be the ring $\mathbb Z_4 / 4\mathbb Z = \{0, 1, 2, 3 \}$ and $\F_q$  the finite field with $q$ elements.
	A {\em linear code} $\C$ of length $n$ over $\mathbb Z_4$ is a $\mathbb Z_4$-submodule of $\mathbb Z_4^{n}$. The {\em dual} $\C^{\bot}$ of a linear code $\C$ of length $n$ over $\mathbb Z_4$ is defined by
	$$\C^{\bot}=\left\{ \textbf{x}\in R^{n}~|~\textbf{x}\cdot \textbf{y}=0 ~{\rm for~all}~\textbf{y}~\in \C \right\},$$
	where $\cdot$ is the Euclidean inner product.
	A linear code $\C$ is called {\em self-orthogonal} if $\C\subseteq \C^{\bot}$, and {\em self-dual} if $\C=\C^{\bot}$.
	Two linear codes $\C_1$ and $\C_2$ of length $n$ over $R$ are called {\em permutation equivalent} if there exists an $n\times n$ permutation matrix $P$ such that $\C_2=\C_1P$.
	
	 Let $\x \in \Z_4^{n}$ and $n_i(\x)$ denote the number of components of $\x$ which are equal to $i$ for each $i \in \Z_4$.
	The {\em Lee weight} of $\x$ is $\text{wt}_{L}(\x)=n_1(\x)+2n_2(\x)+n_3(\x)$. The {\em Lee distance} between two codewords $\x$ and $\y$ is defined by
	$d_L(\x,\y)=\text{wt}_L(\x-\y)$. The {\em Euclidean weight} of $\x$ is $\text{wt}_{E}(\x)=n_1(\x)+4n_2(\x)+n_3(\x)$. The {\em Euclidean distance} between two codewords $\x$ and $\y$ is defined by
	$d_E(\x,\y)=\text{wt}_E(\x-\y)$.
A self-dual code over $\Z_4$ with all Euclidean weights a multiple of $8$ is called {\it Type II}, and otherwise {\it Type I}. In this paper, we consider mainly Type I codes over $\Z_4$.
	
	 It is well known that a $\Z_4$-linear code $\C$ is permutation equivalent to a code with generator matrix $G$ of the standard form
	\begin{equation}
		G=\left[ \begin{matrix}
			I_{k_1}&		A&		B_1+2B_2\\
			O&		2I_{k_2}&		2C\\
		\end{matrix} \right]   \label{1},
	\end{equation}
	where $A$, $B_1$, $B_2$, and $C$ are matrices with entries from $\F_2$, and $O$ is the $k_2\times k_1$ zero matrix. The code $\C$ is of type $4^{k_1}2^{k_2}$.
	The {\em residue code} ${\text{Res}}(\C)$ of $\C$ is defined by
	$${\text{Res}}(\C)=\{(c_1~({\rm mod}~2),c_2~({\rm mod}~2),\ldots,c_n~({\rm mod}~2))~|~(c_1,c_2,\ldots,c_n)\in \C\}.$$ The {\em torsion code}
	${\text{Tor}}(\C)$ of $\C$ is defined by
	$${\text{Tor}}(\C)=\left\{ \boldsymbol{c}\in \F_{2}^{n}~|~2\boldsymbol{c}\in \C \right\}.$$
	If $\C$ has generator matrix $G$ in the standard form (\ref{1}), then ${\text{Res}}(\C)$ and ${\text{Tor}}(\C)$ have  generator matrices
	\begin{equation*}
		{G}_\text{Res}=\left[ \begin{matrix}
			I_{k_1}&		A&		B_1\\
		\end{matrix} \right],
		G_\text{Tor}=\left[ \begin{matrix}
			I_{k_1}&		A&		B_1\\
			\bf{0}&		I_{k_2}&		C\\
		\end{matrix} \right],
	\end{equation*} respectively.

	We explain some notations~\cite{classfication of sd code} and their meanings which will be used in the following sections.
	
	\begin{table}[htbp]
		\caption{Symbols and their meanings} 
		\label{notations}
		\centering
		\begin{tabular}{cc}
			\toprule
			notations & representations \\
			\midrule 
			$\mathscr{A} _1$ &  $\left\{ 0,2 \right\} $\\
			${v_1}$ &   $0101...01$  \\
			${v_2}$ &   $ 00...0011$ \\
			$\D_{2m}$ & the self-dual code of length $2m$ generated by
			$11130...0, 0011130...0, ..., 0...01113$\\
			
			$\D_{2m}^{\bigcirc }$ & the self-dual code generated by $\D_{2m}$ and $1300...0011$\\
			
			$\D_{2m}^{+}$ & the self-dual code generated by $\D_{2m}$ and $2v_2$ \\
			
			$\D_{2m}^{\oplus}$ &  the self-dual code generated by $\D_{2m}^{\bigcirc }$ and $\D_{2m}^{+}$ \\
			$\mathcal{E}_{7}^{+}$ & the self-dual code generated by 10003110,1010031,1101003,2222222\\
			
			
			
			
			\toprule
		\end{tabular}
	\end{table}

Self-dual codes over $\Z_4$ can be used to construct Euclidean lattices.

We define some definitions and facts on lattices.
An Euclidean lattice (simply lattice) $L$ in dimension $n$ is called {\it unimodular} if $L = L^*$, where $L^* = \{ \x \in \mathbb R^n ~|~ (\x, \y) \in \mathbb Z {\mbox{ for all }} \y \in L \}$ under the Euclidean inner product $(\x, \y)$. The {\it norm} of a vector $\x$ is $(\x, \x)$. The {\it minimum norm} of $L$ is the smallest norm among all
nonzero vectors of $L$. The kissing number $N(L)$ is the number of vectors of $L$ with minimum norm.

An unimodular lattice is {\it even} or {\it Type II} if all norms are even, otherwise {\it odd} or {\it Type I}. Let $\mu_{max}^o(n)$ denote the largest minimum norm among odd unimodular lattices in dimension $n$. An odd unimodular lattice is called {\it optimal} if it has the largest minimum norm $\mu_{max}^o(n)$. It is known that $\mu_{max}^o(n)=3$ for $n=26, 27, \dots, 30, 31, 33, 34, 35$~\cite{NebSlo}.

Let $\C$ be a Type II (Type I, respectively) code over $\Z_4$ of length $n$ and minimum Euclidean weight $d_E$ . Then the
following lattice
\[
A_4(\C) = \frac{1}{2} \{ (x_1, \dots, x_n) \in \mathbb Z^n ~|~ (x_1 \pmod{4}, \dots, x_n \pmod{4} ) \in \C \}.
\]
is an even (odd, respectively) unimodular lattice with minimum norm $\min \{4, d_E/4 \}$~\cite{BonSolBacMou}.
	
	\section{Expanding self-orthogonal codes over $\Z_4$ to self-dual codes}
	In this section, we will give basic results about expanding a self-orthogonal code to a self-dual code over $\Z_4$. We note that similar results over finite fields have  been studied in \cite{max s.o code}.

	\begin{prop} \label{prop1} (\cite[Proposition 2.3]{max s.o code})
		A self-orthogonal code $\C$ over $\mathbb F_q$ is maximal if and only if $
		\left[ \textbf{v},\textbf{v} \right] \ne0$ for all $\textbf{v}\in \C^{\bot} \setminus \C$.
	\end{prop}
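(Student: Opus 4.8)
The plan is to prove both directions by contraposition, since the obstruction to maximality is exactly the existence of a self-orthogonal vector lying outside $\C$. Throughout I use that $\C$ is \emph{maximal} among self-orthogonal codes precisely when no self-orthogonal code properly contains it, and that a code is self-orthogonal exactly when every pair of its codewords (including each codeword paired with itself) is orthogonal under the symmetric Euclidean inner product $[\cdot,\cdot]$.

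For the direction ``the condition implies maximality,'' I would assume $\C$ is \emph{not} maximal and exhibit a vector violating the hypothesis. By assumption there is a self-orthogonal code $\D$ with $\C \subsetneq \D$. Taking duals reverses inclusions, so $\D \subseteq \D^{\bot} \subseteq \C^{\bot}$, and therefore any $\textbf{v}\in \D\setminus\C$ automatically lies in $\C^{\bot}\setminus\C$. Since $\D$ is self-orthogonal, $[\textbf{v},\textbf{v}]=0$, contradicting the hypothesis. Hence the condition forces $\C$ to be maximal.

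For the converse, ``maximality implies the condition,'' I would again argue contrapositively: assuming some $\textbf{v}\in \C^{\bot}\setminus\C$ satisfies $[\textbf{v},\textbf{v}]=0$, I build the explicit larger code $\D=\C+\F_q\textbf{v}$. It strictly contains $\C$ because $\textbf{v}\notin\C$. To check self-orthogonality, I expand $[\textbf{c}_1+a_1\textbf{v},\,\textbf{c}_2+a_2\textbf{v}]$ by bilinearity into four terms: $[\textbf{c}_1,\textbf{c}_2]=0$ since $\C$ is self-orthogonal; the two cross terms $[\textbf{c}_1,\textbf{v}]$ and $[\textbf{v},\textbf{c}_2]$ vanish since $\textbf{v}\in\C^{\bot}$ and the form is symmetric; and $[\textbf{v},\textbf{v}]=0$ by assumption. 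Thus $\D$ is a self-orthogonal code strictly containing $\C$, so $\C$ is not maximal.

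The argument is essentially routine linear algebra, and there is no serious obstacle. The only point requiring any care is the self-orthogonality verification for $\D$, where one must invoke symmetry of the Euclidean inner product so that both cross terms are killed by the single membership $\textbf{v}\in\C^{\bot}$, and observe that adjoining one self-orthogonal vector orthogonal to all of $\C$ already makes the entire span self-orthogonal.
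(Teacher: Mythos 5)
Your proof is correct. The paper does not actually prove this proposition---it is quoted without proof from the cited reference---and your two-directional argument (adjoining an isotropic vector $\textbf{v}\in \C^{\bot}\setminus\C$ to build $\C+\F_q\textbf{v}$, and conversely extracting such a vector from any strictly larger self-orthogonal code) is exactly the standard argument one would expect there, with the bilinearity/symmetry check carried out correctly.
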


In  the proof of Lemma~\ref{lm1}, Zorn's lemma was used \cite[Lemma 2.4]{max s.o code}). We give an easier proof of this without Zorn's lemma as follows.

	\begin{lm} \label{lm1} (\cite[Lemma 2.4]{max s.o code}) \label{original lemma}
		Every self-orthogonal $[n, k]$ code $\C$ over $\mathbb F_q$ is contained in a maximal self-orthogonal $[n, k' \ge k]$ code $\C_m$ such that $\C_m\subseteq \C^{\bot}$.	
	\end{lm}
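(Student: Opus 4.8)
The plan is to replace the Zorn's lemma argument with a finite greedy construction, exploiting the fact that over the finite field $\F_q$ every self-orthogonal code has dimension bounded above (by $\lfloor n/2 \rfloor$), so any strictly increasing chain of such codes must terminate after finitely many steps. I would build a finite ascending chain $\C = \C_0 \subsetneq \C_1 \subsetneq \cdots \subsetneq \C_m$ of self-orthogonal codes, each contained in $\C^{\bot}$, where at each step the dimension increases by exactly one and the final code is maximal.

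The inductive step is the heart of the argument. Suppose I have reached a self-orthogonal code $\C_i$ with $\C \subseteq \C_i \subseteq \C^{\bot}$. If $\C_i$ is maximal I stop and set $\C_m = \C_i$. Otherwise I apply Proposition~\ref{prop1} to $\C_i$: since $\C_i$ is not maximal, there exists $\textbf{v} \in \C_i^{\bot} \setminus \C_i$ with $[\textbf{v}, \textbf{v}] = 0$. I then set $\C_{i+1} = \langle \C_i, \textbf{v} \rangle$.

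I would verify two properties. First, $\C_{i+1}$ is self-orthogonal: every element has the form $\textbf{c} + \lambda \textbf{v}$ with $\textbf{c} \in \C_i$ and $\lambda \in \F_q$, and expanding $[\textbf{c} + \lambda\textbf{v},\, \textbf{c}' + \mu\textbf{v}]$ I use that $[\textbf{c},\textbf{c}'] = 0$ (as $\C_i$ is self-orthogonal), $[\textbf{c}, \textbf{v}] = [\textbf{v}, \textbf{c}'] = 0$ (as $\textbf{v} \in \C_i^{\bot}$), and $[\textbf{v},\textbf{v}] = 0$ (by the choice of $\textbf{v}$), so every inner product among generators vanishes. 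Second, $\C_{i+1} \subseteq \C^{\bot}$: since $\C \subseteq \C_i$ we have $\C_i^{\bot} \subseteq \C^{\bot}$, hence $\textbf{v} \in \C_i^{\bot} \subseteq \C^{\bot}$; combined with $\C_i \subseteq \C^{\bot}$ and linearity of $\C^{\bot}$, this gives $\C_{i+1} \subseteq \C^{\bot}$. Because $\textbf{v} \notin \C_i$, the dimension strictly increases: $\dim \C_{i+1} = \dim \C_i + 1$.

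Finally I would conclude by termination. Since each $\C_i$ is self-orthogonal, $\dim \C_i \le \lfloor n/2 \rfloor$, so the strictly increasing dimensions force the procedure to halt after finitely many steps at some $\C_m$, which is therefore maximal, satisfies $\C \subseteq \C_m \subseteq \C^{\bot}$, and has $k' = \dim \C_m \ge k$. The only point needing care is applying Proposition~\ref{prop1} to the intermediate code $\C_i$ rather than to $\C$ itself, together with the observation $\C_i^{\bot} \subseteq \C^{\bot}$ that keeps the entire chain inside $\C^{\bot}$; everything else is a routine bilinearity computation, and no maximal-element axiom is required.
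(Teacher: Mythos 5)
Your proposal is correct and follows essentially the same route as the paper: an iterative (greedy) extension using Proposition~\ref{prop1} at each intermediate code to adjoin an isotropic vector from $\C_i^{\bot}\setminus\C_i$, with termination forced by the bound $\dim \C_i \le \lfloor n/2\rfloor$. Your write-up is somewhat more explicit about verifying self-orthogonality of each $\C_{i+1}$ and the containment $\C_i^{\bot}\subseteq\C^{\bot}$, but the argument is the same.
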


	\begin{proof}
		Let $\C$ be a self-orthogonal $[n, k]$ code. Then $k \leqslant \lfloor n/2 \rfloor$. If $\C$ is maximal self-orthogonal, we are done. Assume that $\C$ is not a maximal self-orthogonal code. From Proposition
		\ref{prop1}, we know that there exists $\textbf{v}\in \C^{\bot} \setminus \C$, so that $\left[ \textbf{v},\textbf{v} \right] =0$. Let $G$ be a generator matrix of code $\C$. Take $
		G_1=\left[ \begin{array}{c}
			\textbf{v}\\
			G\\
		\end{array} \right],
		$
		and let $\C_1$ be a linear code with generator matrix $G_1$. Then we can say the code $\C_1$ is a self-orthogonal $[n, k+1]$ code and $\C\subsetneq \C_1$. If $\C_1$ is a maximal self-orthogonal code, then we are done. If not, we repeat this process (i.e., choose $\textbf{v} ' \in \C_1^{\bot} \setminus \C_1$ where $\textbf{v} ' \in \C^{\bot}$ trivially) to get a self-orthogonal code $\C_2$ such that $\C \subsetneq \C_1 \subsetneq \C_2 \subseteq \C^{\bot}$. This process will end since the dimension increases one by one and the dimension of a maximal self-orthogonal code is at most $\lfloor n/2 \rfloor$. We denote a maximal self-orthogonal code containing $\C$ obtained in this process by $\C_m$, which then satisfies $\C_m \subseteq \C^{\bot}$.	
	\end{proof}

	\begin{thm} \label{thm-ext-1}
		If $\C$ is a self-orthogonal code of length $n$ over $\Z_4$ with type $4^{k_1}2^{k_2}$, then $\C$ can be expanded to a self-dual code $\C'$ of length $n$ over $\Z_4$ with type $4^{k_1}2^{k_2'}$, where $k_2'=n-2k_1$.
	\end{thm}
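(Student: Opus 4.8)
The plan is to reach a self-dual code while keeping the number $k_1$ of order-four generators fixed; this forces every vector we adjoin to be of order two, i.e.\ of the form $2\textbf{v}$ with $\textbf{v}\in\F_2^n$ (viewed inside $\Z_4^n$). The whole argument will be phrased through the binary codes $\text{Res}(\C)$ and $\text{Tor}(\C)$, and the goal will be to enlarge $\C$ until $\text{Tor}(\C)=\text{Res}(\C)^{\bot}$ (binary dual). First I would record the standard facts that, for a self-orthogonal $\C$, the residue code $\text{Res}(\C)$ is a binary self-orthogonal $[n,k_1]$ code and $\text{Res}(\C)\subseteq\text{Tor}(\C)\subseteq\text{Res}(\C)^{\bot}$; the last inclusion follows because for $g\in\C$ and $2\boldsymbol c\in\C$ one has $g\cdot 2\boldsymbol c=2(g\cdot\boldsymbol c)\equiv 0\pmod 4$, so $(g\bmod 2)\cdot\boldsymbol c\equiv 0\pmod 2$. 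Since $\dim\text{Res}(\C)=k_1$ and $\dim\text{Tor}(\C)=k_1+k_2$, we have $\dim\text{Res}(\C)^{\bot}=n-k_1$, so $\text{Tor}(\C)=\text{Res}(\C)^{\bot}$ is equivalent to $k_1+k_2=n-k_1$.

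The key computation is the criterion for adjoining $2\textbf{v}$. For any generator $g$ of $\C$, $2\textbf{v}\cdot g=2(\textbf{v}\cdot g)$, and this is $\equiv 0\pmod 4$ exactly when $\textbf{v}\cdot(g\bmod 2)\equiv 0\pmod 2$; as the order-two rows of $G$ reduce to $0$ mod $2$, this says precisely that $\textbf{v}\in\text{Res}(\C)^{\bot}$. Moreover $2\textbf{v}\cdot 2\textbf{v}=4(\textbf{v}\cdot\textbf{v})\equiv 0\pmod 4$ automatically, so a single order-two vector never violates self-orthogonality with itself. Hence, whenever $\textbf{v}\in\text{Res}(\C)^{\bot}\setminus\text{Tor}(\C)$, the code $\C_1=\langle\C,2\textbf{v}\rangle$ is again self-orthogonal; adjoining $2\textbf{v}$ leaves $\text{Res}(\C_1)=\text{Res}(\C)$ (its residue is $0$) but enlarges the torsion to $\langle\text{Tor}(\C),\textbf{v}\rangle$, so the type changes from $4^{k_1}2^{k_2}$ to $4^{k_1}2^{k_2+1}$.

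I would then iterate this step. As long as $\C$ is not self-dual we have $\dim\text{Tor}(\C)<n-k_1=\dim\text{Res}(\C)^{\bot}$, so the strict inclusion $\text{Tor}(\C)\subsetneq\text{Res}(\C)^{\bot}$ guarantees a fresh $\textbf{v}$ to adjoin; since $\dim\text{Tor}$ strictly increases at each step and is bounded by $n-k_1$, the process terminates. At termination $\text{Tor}(\C')=\text{Res}(\C')^{\bot}$, so the type is $4^{k_1}2^{k_2'}$ with $k_1+k_2'=n-k_1$, i.e.\ $k_2'=n-2k_1$, and $|\C'|=2^{2k_1+k_2'}=2^n=4^{n/2}$. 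A self-orthogonal code of size $4^{n/2}$ equals its dual, so $\C'$ is self-dual of the claimed type.

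The main thing to get right — and the only place the $\Z_4$ setting differs from the field analogue of Lemma~\ref{lm1} — is the insistence on order-two additions: over a field one simply enlarges inside $\C^{\bot}\setminus\C$, whereas here keeping $k_1$ fixed forces us to work inside $\text{Res}(\C)^{\bot}\setminus\text{Tor}(\C)$ at the binary level and to check that order-two vectors are automatically isotropic. Once the orthogonality criterion $2\textbf{v}\perp\C\iff\textbf{v}\in\text{Res}(\C)^{\bot}$ and the bookkeeping ($\text{Res}$ fixed, $\dim\text{Tor}$ raised by one) are in place, no genuine obstacle remains; what is left is the routine dimension count and the size-versus-duality implication.
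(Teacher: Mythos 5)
Your proposal is correct and follows essentially the same route as the paper: both enlarge $\C$ by adjoining doubled vectors $2\textbf{v}$ with $\textbf{v}\in{\text{Res}}(\C)^{\bot}\setminus{\text{Tor}}(\C)$ until ${\text{Tor}}$ fills up ${\text{Res}}(\C)^{\bot}$, the only cosmetic difference being that the paper selects a full set of $n-2k_1-k_2$ linearly independent such vectors at once while you adjoin them one at a time. Your write-up actually supplies details the paper leaves implicit, namely the explicit verification that $2\textbf{v}\perp\C$ is equivalent to $\textbf{v}\in{\text{Res}}(\C)^{\bot}$ and the final cardinality argument $|\C'|=2^{2k_1+k_2'}=4^{n/2}$ forcing self-duality.
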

	
	\begin{proof}
		Let $\C$ be a self-orthogonal code of length $n$ over $\Z_4$ with type $4^{k_1}2^{k_2}$. If $\C$ is self-dual, then we are done. Now we assume that $\C$ is not a self-dual code.
		Let ${\mbox{Res}}(\C)$ and ${\mbox{Tor}}(\C)$ be the residue code and torsion code of $\C$, respectively. Then ${\mbox{Res}}(\C)=\mathcal{A}$ is a $k_1$-dimensional doubly-even self-orthogonal binary code and ${\mbox{Tor}}(\C)=\mathcal{B}$ is also a  binary code with dimension $k_1+k_2$. Since $\C$ is not self-dual, so we can say $\mathcal{B} \subsetneq \mathcal{A}^{\bot}$. Let $H$ be the parity matrix of code $\mathcal{B}$ and $G_{\mathcal{A}^{\perp}}$ be the generator matrix of code $\mathcal{A}^{\perp}$. We can write $G_{\mathcal{A}^{\perp}}$ as
		$$
		G_{\mathcal{A}^{\perp}}=\left[ \begin{array}{l}
			~~~{\bf \alpha}_1~~~\\
			~~~{\bf \alpha}_2\\
			~~~ ~~ \vdots\\
			~~{\bf \alpha}_{n-k_1}\\
		\end{array} \right].
		$$
		For any row vector $\alpha_i$ in $G_{\mathcal{A}^{\perp}}$, we determine whether $\alpha_i$  satisfies the following equation, where $1\leqslant i\leqslant n-k_1$.
		\begin{equation} \label{aH^T=0}
			\alpha _iH^T=0
		\end{equation}
		If so, we delete the corresponding $\alpha_i$ in the matrix $G_{\mathcal{A}^{\perp}}$. Since the dimension is finite, we can find all linear independent vectors ${\bf \alpha}_{j_1}, {\bf \alpha}_{j_2}, ...,{\bf \alpha}_{j_{n-2k_1-k_2}}$ and each vector is from $\mathcal{A}^{\perp}~ \backslash ~ \mathcal{B}$.
		Assume that $G$ is the generator matrix of code $\C$. Then,
		$$
		G'=\left[ \begin{array}{l}
			~~~~~G~~~~~\\
			~~~~2{\bf \alpha}_{j_1}~~~\\
			~~~~2{\bf \alpha}_{j_2}\\
			~~~~ ~~ \vdots\\
			2{\bf \alpha}_{j_{n-2k_1-k_2}}\\
		\end{array} \right]
		$$
		generates the self-dual code $\C'$ over $\Z_4$ which contains $\C$ as a subcode. Note that $\C'$ is with type $4^{k_1}2^{k_2'}$, where $k_2'=n-2k_1$.
	\end{proof}

	In Theorem~\ref{thm-ext-1}, we only add vectors to $\C$ of the form $2{\bf c}$, where ${\bf c} \in {\mbox{Res}}(\C)^{\bot} ~\backslash ~{\mbox{Tor}}(\C)$ to get a self-dual code over $\mathbb Z_4$ containing $\C$. This method keeps the value of $k_1$. We make the Algorithm \ref{algorithm1} based on Theorem \ref*{thm-ext-1} in Appendix. One can use this algorithm to obtain a self dual code which keeps the type of $k_1$.

	\begin{thm}(\cite[Thoerem 3]{classfication of sd code}) \label{thm-z4-sd}
		If $\mathcal A$ and $\mathcal B$ are binary codes with $\mathcal A \subseteq \mathcal B$, then there is a linear code $\C$ over $\mathbb Z_4$ with ${\mbox{Res}}(\C)=\mathcal A$ and ${\mbox{Tor}}(\C)=\mathcal B$. In addition, if $\mathcal A$ is doubly-even self-orthogonal and $\mathcal B \subseteq \mathcal A^{\perp}$, then there is a self-orthogonal code $\C$ over $\mathbb Z_4$ with ${\mbox{Res}}(\C)=\mathcal A$ and
		${\mbox{Tor}}(\C)=\mathcal B$. Furthermore, if $\mathcal B = \mathcal A^{\perp}$, then $\C$ is a self-dual code over $\mathbb Z_4$.
	\end{thm}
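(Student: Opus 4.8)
The plan is to prove the three assertions in turn, building an explicit set of generators over $\Z_4$ and then switching on the orthogonality hypotheses one at a time. For the first assertion I would start from a basis $g_1,\dots,g_{k_1}$ of $\mathcal A$ and, since $\mathcal A\subseteq\mathcal B$, extend it to a basis $g_1,\dots,g_{k_1},r_1,\dots,r_{k_2}$ of $\mathcal B$ with $k_2=\dim\mathcal B-\dim\mathcal A$. Writing $\tilde g_i\in\{0,1\}^n\subseteq\Z_4^n$ for the obvious lift of $g_i$, I let $\C$ be the $\Z_4$-code generated by $\tilde g_1,\dots,\tilde g_{k_1},2r_1,\dots,2r_{k_2}$. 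Reducing the generators modulo $2$ kills the rows $2r_j$ and returns the $g_i$, so ${\mbox{Res}}(\C)=\mathcal A$; and since each $2\tilde g_i$ and each $2r_j$ lies in $\C\cap\{0,2\}^n$, the torsion code contains $\mathcal A$ and every $r_j$, hence all of $\mathcal B$. Because this generator matrix already has the shape of the standard form \eqref{1}, with $k_1$ order-$4$ rows and $k_2$ order-$2$ rows, $\C$ has type $4^{k_1}2^{k_2}$, so $\dim{\mbox{Tor}}(\C)=k_1+k_2=\dim\mathcal B$ and therefore ${\mbox{Tor}}(\C)=\mathcal B$.

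For the second assertion I keep the same $r_j$ but replace each $\tilde g_i$ by a carefully chosen lift $\tilde g_i'=\tilde g_i+2u_i$ with $u_i\in\{0,1\}^n$, and then check that the generators become pairwise orthogonal for the Euclidean inner product on $\Z_4^n$; by bilinearity pairwise orthogonality of generators already forces $\C\subseteq\C^{\perp}$. The easy products are automatic: $(2r_i)\cdot(2r_j)\equiv0$, and $\tilde g_i'\cdot(2r_j)\equiv 2(g_i\cdot r_j)\equiv0\pmod 4$ because $g_i\in\mathcal A$ and $r_j\in\mathcal B\subseteq\mathcal A^{\perp}$. The diagonal entries $\tilde g_i'\cdot\tilde g_i'\equiv{\rm wt}(g_i)\equiv0\pmod4$ are precisely where the doubly-even hypothesis on $\mathcal A$ enters. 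The only genuinely nontrivial requirement is that the off-diagonal products $\tilde g_i'\cdot\tilde g_j'$ vanish modulo $4$: since $\mathcal A$ is self-orthogonal the integer products $\tilde g_i\cdot\tilde g_j$ are even, so I may set $s_{ij}=\tfrac12(\tilde g_i\cdot\tilde g_j)\bmod 2$, and the adjustment changes each product modulo $4$ by $2\big(g_i\cdot\bar u_j+g_j\cdot\bar u_i\big)$, where $\bar u_j=u_j\bmod 2$. Writing $M$ for the $k_1\times n$ matrix with rows $g_i$ and $U$ for the $n\times k_1$ matrix with columns $\bar u_j$, the requirement becomes the single linear equation $MU+(MU)^{\top}=S$ over $\F_2$, where $S=(s_{ij})$ is symmetric with zero diagonal.

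I would then solve this system explicitly. Because the $g_i$ are independent, $M$ is surjective, so $MU$ can be made equal to \emph{any} $k_1\times k_1$ matrix; choosing $MU=X$, the strictly upper-triangular part of $S$, gives $X+X^{\top}=S$ since $S$ has zero diagonal. Reading off the columns of a corresponding $U$ and lifting them to the $u_j$ produces the desired self-orthogonal $\C$ with ${\mbox{Res}}(\C)=\mathcal A$ and ${\mbox{Tor}}(\C)=\mathcal B$. For the third assertion, suppose $\mathcal B=\mathcal A^{\perp}$, so $\dim\mathcal B=n-k_1$ and $k_2=n-2k_1$; then $\C$ has type $4^{k_1}2^{\,n-2k_1}$, whence $|\C|=4^{k_1}2^{\,n-2k_1}=2^{n}$. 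Since $\C$ is self-orthogonal and $|\C^{\perp}|=4^n/|\C|=2^n=|\C|$, we get $\C=\C^{\perp}$, i.e. $\C$ is self-dual.

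I expect the main obstacle to be exactly the off-diagonal step in the second assertion: guaranteeing that the integer-level products $\tilde g_i\cdot\tilde g_j$ can be simultaneously corrected modulo $4$ by a single choice of lifts $\tilde g_i'$. The reformulation as the matrix equation $MU+(MU)^{\top}=S$ over $\F_2$, which is solvable because $S$ has zero diagonal (doubly-even) and $M$ is surjective (independence of the $g_i$), is the heart of the argument; the remaining verifications are routine bilinearity and counting.
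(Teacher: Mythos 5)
Your proposal is correct and follows essentially the same route as the paper: lift the standard-form generator matrix of $\mathcal B$ to $\Z_4$, multiply the non-residue rows by $2$, and repair the pairwise inner products of the free rows by adding a vector in $2\Z_4^n$, with doubly-evenness handling the diagonal and a counting argument giving self-duality when $\mathcal B=\mathcal A^{\perp}$. Your matrix equation $MU+(MU)^{\top}=S$, solvable by surjectivity of $M$, is just an explicit reformulation of the paper's Lemma on choosing $M=(m_{ij})$ with $e_i\cdot e_j\equiv 2(m_{ij}+m_{ji})\pmod 4$ and correcting by $2MS$ via the dual basis, so the two arguments coincide.
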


	Let us explain the process to construct a self-dual code given in Theorem~\ref{thm-z4-sd} when $\mathcal B = \mathcal A^{\perp}$. For any binary doubly-even code $\mathcal A$, without loss of generality we can assume that its standard form of generator matrix $G_1$ is given as
	\begin{equation} \notag
		G(\mathcal A) = G_1=\left[ \begin{matrix}
			I_{k_1}&		A&		B\\
		\end{matrix} \right].
	\end{equation} Since the binary self-orthogonal code $\mathcal{B}=\mathcal{A}^\perp$ contains $\mathcal{A}$, we can assume that the standard form of generator matrix of code $\mathcal{B}$ is
	\begin{equation} \notag
		G(\mathcal B) = G_2=\left[ \begin{matrix}
			I_{k_1}&		A&		B\\
			0 & I_{k_2}  & C \\
		\end{matrix} \right].
	\end{equation}
	Take \begin{equation} \label{2}
		G=\left[ \begin{matrix}
			I_{k_1}&		A&		B\\
			0 & 2I_{k_2}  & 2C \\
		\end{matrix} \right]
	\end{equation} as a generator matrix of code $\C$ over $\Z_4$, then we have that Res$(\C)=\mathcal{A}$ and Tor$(\C)=\mathcal{B}$. Next, we need to consider the orthogonal relationship of different rows in $G$. In $\Z_4$, we cannot guarantee two different rows $i,j$ where $1\leqslant j<i\leqslant k_1$ are orthogonal. So we have to modify this generator matrix (\ref{2}). We can achieve an orthogonal relationship by replacing the $(i,j)$th entry with the inner product modulo 4 of row $i$ and row $j$. Then modified matrix $G'$ can generate a self-dual code over $\Z_4$. More detailed procedures are given in Lemma \ref{lem-binquar}.
		
	\medskip

The authors of \cite{Fields} in Section III gave a method to construct many self-dual codes over $\Z_4$ from  binary doubly-even codes.	The authors of \cite{Z_p^2} in Section 4 gave a method to
	construct  many self-dual codes over $\Z_9$ with the given length $n$. We modify this method to construct many self-dual codes over $\Z_4$. The following lemma  will give a detailed description for obtaining  many self-dual codes over $\Z_4$ from a given binary doubly-even code. We give the  proof of the following lemma in order to be self-contained.
	
	\begin{lm}\label{lem-binquar}(\cite[Lemma 5]{mass formula over Z_4},\cite[Section III]{Fields})
		Given a binary doubly-even code $\C_1$ in a standard form with dimension $k_1$ and length $n$, we can  construct $2^{\frac{k_1\left( k_1+1 \right)}{2}}$ distinct self-dual codes $\C_{sd}$ over $\Z_4$ with the same length $n$ and  ${\text{Res}(\C_{sd})}=\C_1$.
	\end{lm}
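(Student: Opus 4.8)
The plan is to regard every self-dual code $\C_{sd}$ over $\Z_4$ with $\text{Res}(\C_{sd})=\C_1$ as a $\Z_4$-lift of a fixed binary basis of $\C_1=\mathcal A$, and then to count the admissible lifts. First I would record the structural constraints forced by self-duality: since $\mathcal A$ is doubly-even it is self-orthogonal, so $\mathcal A\subseteq\mathcal A^{\perp}$; and since $\C_{sd}$ is self-dual with residue $\mathcal A$ of dimension $k_1$, its type is $4^{k_1}2^{\,n-2k_1}$ and its torsion code must be $\text{Tor}(\C_{sd})=\mathcal A^{\perp}$. Hence $\C_{sd}$ contains $2\mathcal A^{\perp}$ and is generated by $2\mathcal A^{\perp}$ together with lifts $\tilde g_i=g_i+2u_i$ ($u_i\in\F_2^{\,n}$) of the rows $g_1,\dots,g_{k_1}$ of the standard generator matrix $[\,I_{k_1}\ A\ B\,]$ of $\mathcal A$.

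Next I would pin down the correct parameter space and prove it is in bijection with the codes. Adding any element of $2\mathcal A^{\perp}$ to a generator leaves $\C_{sd}$ unchanged, so the code depends on each $u_i$ only through its class in $\F_2^{\,n}/\mathcal A^{\perp}$; conversely, the $\bmod\,2$-preimage of the residue $g_i$ inside $\C_{sd}$ is exactly the coset $\tilde g_i+2\mathcal A^{\perp}$, so equality of two such codes forces $u_i\equiv u_i'\pmod{\mathcal A^{\perp}}$, and distinct classes give genuinely distinct codes. The identity block $I_{k_1}$ makes $\langle e_1,\dots,e_{k_1}\rangle$ a complement of $\mathcal A^{\perp}$ in $\F_2^{\,n}$ (any $v$ supported on the first $k_1$ coordinates satisfies $v\cdot g_i=v_i$, so $v\in\mathcal A^{\perp}$ forces $v=0$); thus I may take each $u_i$ supported on the first $k_1$ coordinates and encode the lift by a free matrix $T=(t_{i\ell})\in\F_2^{\,k_1\times k_1}$ via $u_i=\sum_\ell t_{i\ell}e_\ell$. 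This is precisely the ``modify the $(i,j)$-entries'' procedure described before the lemma, giving $k_1^2$ free binary parameters.

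Then I would impose self-orthogonality and count. Products of two torsion generators, and of a free generator $\tilde g_i$ with a torsion generator $2h$, are automatically $\equiv0\pmod 4$ (the latter because $g_i\cdot h\equiv0\pmod 2$ for $h\in\mathcal A^{\perp}$); and the diagonal products satisfy $\tilde g_i\cdot\tilde g_i\equiv\mathrm{wt}(g_i)\equiv0\pmod 4$ automatically because $\mathcal A$ is doubly-even. The only surviving conditions are the off-diagonal ones: for $i<j$, expanding $\tilde g_i\cdot\tilde g_j=g_i\cdot g_j+2(g_i\cdot u_j+g_j\cdot u_i)\pmod 4$ and using $g_i\cdot u_j=t_{ji}$, $g_j\cdot u_i=t_{ij}$ reduces the requirement to the single $\F_2$-linear equation $t_{ij}+t_{ji}=c_{ij}$, where $c_{ij}=\tfrac12(g_i\cdot g_j)\bmod 2$ is a constant. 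These are $\binom{k_1}{2}$ equations, each in a distinct pair of variables $\{t_{ij},t_{ji}\}$, hence manifestly independent; the solution set has size $2^{\,k_1^2-\binom{k_1}{2}}=2^{\,k_1(k_1+1)/2}$. Each solution yields a self-orthogonal code of type $4^{k_1}2^{\,n-2k_1}$, so of size $2^{\,n}$, which is therefore self-dual with $\text{Res}=\mathcal A=\C_1$, and by the bijection all of them are distinct.

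I expect the main obstacle to be the distinctness/bijection bookkeeping of the second step rather than the final arithmetic: one must argue carefully that different lifts give different \emph{codes} (not merely different generator matrices) and that restricting the correction vectors to the first $k_1$ coordinates loses no generality, which is exactly where the complement property of the identity block is used. Once that is in place, the independence of the $\binom{k_1}{2}$ constraints and the vanishing of the diagonal constraints via double-evenness are routine.
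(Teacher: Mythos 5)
Your proposal is correct and follows essentially the same route as the paper: you parametrize the $\Z_4$-lifts of a fixed standard-form generator matrix of $\C_1$ by a $k_1\times k_1$ binary matrix (your $T=(t_{i\ell})$, with the unit vectors on the first $k_1$ coordinates playing the role of the paper's dual basis $e_i^*$), impose the symmetrized orthogonality conditions $t_{ij}+t_{ji}=\tfrac12(g_i\cdot g_j)\bmod 2$, note the diagonal conditions are vacuous by double-evenness, and count $2^{k_1(k_1+1)/2}$ solutions. Your treatment is in fact somewhat more complete than the paper's, since you explicitly verify that distinct parameter choices yield distinct codes (via the coset $\tilde g_i+2\mathcal A^{\perp}$ and the complement property of the identity block), a point the paper's proof leaves implicit.
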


	\begin{proof}
	    Assume that $\C_1$ has a generator matrix  $G_1$ with a basis $\{ e_1, e_2, \dots, e_{k_1} \}$ so that
	    $$
	    G_1=\left[ \begin{array}{l}
	    	e_1\\
	    	e_2\\
	    	\vdots\\
	    	e_{k_1}\\
	    \end{array} \right] .
	    $$
	A  set of $k$ vectors $\left\{ e_1^*, e_2^*, \dots, e_{k_1}^*\right\}$ can be obtained from  $e_i \cdot e_j^*=\delta_{ij}$, where the Kronecker delta $\delta_{ij}$ is defined  as
	$$
	\delta _{ij}=\begin{cases}
		1~~i=j,\\
		0~~i\ne j.\\
	\end{cases}
	$$
	Let $$
	S=\left[ \begin{array}{l}
		e_1^*\\
		e_2^*\\
		\vdots\\
		e_{k_1}^*\\
	\end{array} \right].
	$$

	We choose all possible $k_1 \times k_1$ matrices $M=(m_{ij})$ over $\F_2$ such that $e_i\cdot e_j\equiv 2\left[ m_{ij}+m_{ji} \right] \pmod{4}$ for $1\leqslant i\leqslant k_1$ and $1\leqslant j\leqslant k_1$. The number of distinct matrices $M$ is $2^{\frac{k_1\left( k_1+1 \right)}{2}}$ since it suffices to count the elements on or above the diagonal. The generator matrix $G_{sd}$ of each self-dual code $\C_{sd}$ over $\Z_4$ is obtained by taking $G_1+2MS$ as the first $k_1$ rows of $G_{sd}$ and 2 times a complement of $\C_1$ in $\C_1^{\bot}$ as the last $n-2k_1$ rows of $G_{sd}$.
	\end{proof}
	
	In what follows, we show that there exists another special method that increases the dimension of residue code as in Theorems \ref{thm-ext-2} and \ref{thm-ext-3}.

	\begin{thm} \label{thm-ext-2}
		Suppose that $\C$ is a self-orthogonal code of length $n$ over $\Z_4$ with type $4^{k_1}2^{0}$ in a standard form. If there exists a binary doubly-even self-orthogonal code $\mathcal A$ with parameters $[n, k_1' \geqslant k_1]$ which contains ${\text{Res}}(\C)$ as a subcode, then $\C$ can be expanded to many self-dual codes $\C''$ of type $4^{k_1'}2^{k_2'}$, where  $k_2'=n-2k_1'$.
	\end{thm}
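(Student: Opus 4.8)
The plan is to raise the residue dimension from $k_1$ up to $k_1'$ one generator at a time, lifting each new binary generator of $\mathcal A$ to a $\Z_4$ vector that keeps the code self-orthogonal and keeps $\C$ as a subcode, and then to finish off with Theorem~\ref{thm-ext-1}. Since $\C$ has type $4^{k_1}2^{0}$, it is free and ${\text{Res}}(\C)={\text{Tor}}(\C)$ is a doubly-even self-orthogonal binary $[n,k_1]$ code; the hypothesis supplies a larger doubly-even self-orthogonal code $\mathcal A$ of dimension $k_1'$ containing it, and I want to ``lift $\mathcal A$ along $\C$.''

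First I would fix a basis $e_1,\dots,e_{k_1}$ of ${\text{Res}}(\C)$ and extend it to a basis $e_1,\dots,e_{k_1'}$ of $\mathcal A$, producing a chain ${\text{Res}}(\C)=\mathcal A_0\subsetneq \mathcal A_1\subsetneq\cdots\subsetneq \mathcal A_{k_1'-k_1}=\mathcal A$ with $\mathcal A_j=\langle \mathcal A_0,e_{k_1+1},\dots,e_{k_1+j}\rangle$. Each $\mathcal A_j$ is doubly-even and self-orthogonal, being a subcode of $\mathcal A$. The rows $\tilde e_1,\dots,\tilde e_{k_1}$ of the standard generator matrix of $\C$ form a $\Z_4$-basis reducing modulo $2$ to $e_1,\dots,e_{k_1}$, and $\tilde e_i\cdot\tilde e_j\equiv0\pmod4$ for all $i,j$ because $\C$ is self-orthogonal; these lifts stay fixed throughout, which is what forces $\C\subseteq\C''$.

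The key lifting step is the inductive claim: if $\C^{(j)}$ is a self-orthogonal free code of type $4^{k_1+j}2^0$ with ${\text{Res}}(\C^{(j)})=\mathcal A_j$ and $\C\subseteq\C^{(j)}$, then the next binary generator $e:=e_{k_1+j+1}$ can be lifted to some $\tilde e\in\Z_4^n$ so that $\C^{(j+1)}:=\langle\C^{(j)},\tilde e\rangle$ is again self-orthogonal and free, of type $4^{k_1+j+1}2^0$ with residue $\mathcal A_{j+1}$. Writing $\tilde e=\hat e+2y$ with $\hat e$ a fixed $0/1$ lift of $e$ and $y\in\F_2^n$ unknown, the self-inner product satisfies $\tilde e\cdot\tilde e\equiv \hat e\cdot\hat e=\mathrm{wt}(e)\equiv0\pmod4$ automatically by double-evenness, independently of $y$. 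For a generator $g$ of $\C^{(j)}$ with reduction $\bar g\in\mathcal A_j$ one computes $\tilde e\cdot g\equiv \hat e\cdot g+2\,(y\cdot\bar g)\pmod4$; since $e\cdot\bar g=0$ in $\F_2$ the integer $\hat e\cdot g$ is even, say $\equiv2t_g\pmod4$, so orthogonality reduces to the single $\F_2$-equation $y\cdot\bar g=t_g$. Imposing this for $g=\tilde e_1,\dots,\tilde e_{k_1+j}$ yields a linear system whose coefficient vectors $e_1,\dots,e_{k_1+j}$ are linearly independent, hence the system is consistent and a solution $y$ exists. The reductions of the generators of $\C^{(j+1)}$ are then linearly independent over $\F_2$, so $\C^{(j+1)}$ is free of the asserted type.

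Iterating from $\C^{(0)}=\C$ up to $j=k_1'-k_1$ yields a self-orthogonal free code $\C^{*}$ of type $4^{k_1'}2^{0}$ with ${\text{Res}}(\C^{*})=\mathcal A$ and $\C\subseteq\C^{*}$; applying Theorem~\ref{thm-ext-1} to $\C^{*}$ (with $k_1'$ in place of $k_1$ and $k_2=0$) expands it to a self-dual code $\C''$ of type $4^{k_1'}2^{k_2'}$, $k_2'=n-2k_1'$, still containing $\C$. The qualifier ``many'' is justified by two sources of freedom: at each lifting step the solution set for $y$ is a coset of $\mathcal A_j^{\perp}$, and the final completion via Theorem~\ref{thm-ext-1} (equivalently the count in Lemma~\ref{lem-binquar}) admits numerous choices. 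I expect the main obstacle to be the lifting step itself, namely checking that consistency of the $\F_2$-system follows purely from the linear independence of the $e_i$, and that the chosen lift simultaneously preserves freeness, self-orthogonality, and the containment of $\C$.
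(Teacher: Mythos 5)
Your proof is correct, and it shares the paper's overall skeleton --- first lift $\mathcal A$ to a free self-orthogonal $\Z_4$-code $\C^*$ of type $4^{k_1'}2^0$ with ${\text{Res}}(\C^*)=\mathcal A$ and $\C\subseteq\C^*$, then finish with Theorem~\ref{thm-ext-1} --- but your lifting mechanism is genuinely different. The paper handles the whole block of new rows at once in standard form: it first adjusts the diagonal block $I_{k_1'-k_1}\mapsto E$ via Lemma~\ref{lem-binquar} so that the new rows are mutually orthogonal and self-orthogonal over $\Z_4$, and then repairs orthogonality against the old rows by writing the required $2$'s into the left block $F$, exploiting the fact that each column $j\le k_1$ of the old rows is a standard basis vector, so each entry of $F$ can be tuned independently without disturbing the other inner products. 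You instead adjoin one generator at a time with an arbitrary correction $2y$, and prove a suitable $y$ exists because the $\F_2$-system $y\cdot e_i=t_i$ has linearly independent coefficient vectors; double-evenness disposes of the self-inner product automatically. Your route avoids Lemma~\ref{lem-binquar} and the standard form entirely, works from linear independence of the residue generators alone, and makes transparent where the freedom comes from (a coset of $\mathcal A_j^{\perp}$ at each step); the paper's route is more explicitly constructive and yields the concrete count $2^{((k_1'-k_1)^2+(k_1'-k_1))/2}$ of choices for $E$. A minor caveat applying to both arguments: distinct choices of $y$ (or of $E$) need not produce distinct codes, since translating $y$ by an element of $\mathcal A_j$ changes $\tilde e$ by an element of $2\,{\text{Tor}}(\C^{(j)})\subseteq\C^{(j)}$; neither you nor the paper quantifies ``many'' precisely, but the statement does not require it.
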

	
	\begin{proof}
		
		Note that $\mathcal A^{\perp}$ is a binary $[n, n-k_1']$ code such that $k_1' \leqslant  \lfloor \frac{n}{2} \rfloor$ and
		${\mbox{Res}}(\C) \subseteq \mathcal A \subseteq \mathcal A^{\perp}$.
		Without loss of generality, we can assume that  $\C$  has a generator matrix of the form
		 $
		G_0=\left[\begin{matrix}
			I_{k_1}&		A&		B_1+2B_2\\
		\end{matrix} \right],
		$ where $A$, $B_1$ and $B_2$ are binary matrices.
		Then, $G_1=
		\left[ \begin{matrix}
			I_{k_1}&		A&		B_1\\
		\end{matrix} \right]
		$  is the generator matrix of ${\mbox{Res}}(\C)$.
		With these assumptions,   the code $\mathcal{A}$ has a generator matrix $G_2$ given by
		\begin{equation*}
			G_2=\left[ \begin{matrix}
				I_{k_1}&		A&		B_1\\
				\bf{0}&		I_{k_1'-k_1}&		C\\
			\end{matrix} \right].
		\end{equation*}

		We modify the binary generator matrix $G_2$ into $G_4$ over $\mathbb Z_4$ by following two steps:
		\begin{enumerate}[(1)]
			\item Firstly, according to Lemma \ref{lem-binquar} we change the matrix $\left [\begin{matrix}
			    I_{k_1'-k_1} & C
			\end{matrix}\right]$ to $\left [\begin{matrix}
			E & C
		\end{matrix}\right]$ in a way that any two rows (not necessarily distinct) in this matrix $\left [\begin{matrix}
		E & C
	\end{matrix}\right]$ are orthogonal over $\Z_4$.
 We have $2^{\frac{\left( k_1'-k_1 \right) ^2+\left( k_1'-k_1 \right)}{2}}$ different options to have different matrices $E$.
 Then we obtain $G_3=\left[ \begin{matrix}
				I_{k_1}&		A&		B_1\\
				\bf{0}&		E&		C\\
			\end{matrix} \right].$ \\
		\item 	Secondly, let
		\begin{equation*}
			G_4=\left[ \begin{matrix}
				I_{k_1}&		A&		B_1+2B_2\\
				\bf{0}&		E&		C\\
			\end{matrix} \right].
		\end{equation*}
		According to the orthogonal relationship over $\Z_4$ between two different rows of $G_4$,  we modify $G_4$ by changing the left zero matrix
		${\bf{0}}_{\left( k_1'-k_1 \right) \times k_1}$ into the new $\Z_4$-matrix $F$.
Let  $1+k_1 \leqslant i\leqslant k_1'$ and $1 \leqslant j\leqslant k_1$. Then the $(i,j)$th entry of $F$ equals to the inner product modulo 4 of the $i$th and $j$th rows of $G_4$. It is obvious that any entry of $F$ is either $2$ or $0$. Then we obtain the following matrix $G_5$:
		\begin{equation*}
			G_5=\left[ \begin{matrix}
				I_{k_1}&		A&		B_1+2B_2\\
				F&		E&		C\\
			\end{matrix} \right].
		\end{equation*}
		\end{enumerate}
		From the results of Theorem \ref{thm-ext-1}, we can get the final self-dual code $\C''$, which contains $\C'$ as a subcode. The code $\C''$ is with type $4^{k_1'}2^{n-2k_1}$.
	\end{proof}

	Next, we will consider a special case when ${\mbox{Res}}(\C)$ is a self-orthogonal code containing the all-ones vector.
	
	\begin{lm} (\cite[Lemma 4.6]{MacWilliams}) \label{self-dual codes lemma}
		Let $n$ be divisible by $4$, and let $\mathcal C$ be a doubly-even binary self-orthogonal $[n, k]$ code containing the all-ones vector. If $n \equiv 0 \pmod{8}$, $\mathcal C$ is contained in a doubly-even binary self-dual code $\C_{sd}$. If  $n \equiv 2 \pmod{8}$, there is no doubly-even binary self-dual code $\C_{sd}$ containing $\mathcal C$.
	\end{lm}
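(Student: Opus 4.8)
The plan is to split into the two cases $n \equiv 0 \pmod 8$ and $n \equiv 2 \pmod 8$ and to use a dimension-counting argument on the chain of doubly-even self-orthogonal codes sitting inside $\mathcal C^\perp$. The key structural fact I would rely on is that for a doubly-even code the bilinear inner product interacts with the (mod $4$) weight function in a controlled way; specifically, if $\mathcal C$ is doubly-even then $\mathcal C \subseteq \mathcal C^\perp$, so $k \le n/2$, and every vector $\mathbf{v} \in \mathcal C^\perp$ has a well-defined weight $\operatorname{wt}(\mathbf{v}) \pmod 4$. The all-ones vector $\mathbf 1 \in \mathcal C$ is crucial: for any $\mathbf v \in \mathcal C^\perp$, the pairing $\mathbf 1 \cdot \mathbf v \equiv \operatorname{wt}(\mathbf v) \pmod 2$ must be $0$, so every vector of $\mathcal C^\perp$ has even weight.

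For the case $n \equiv 0 \pmod 8$, I would argue by enlarging $\mathcal C$ one step at a time, mirroring the technique used for Lemma~\ref{lm1}. If $\mathcal C$ is already self-dual we are done; otherwise I want to find $\mathbf v \in \mathcal C^\perp \setminus \mathcal C$ with $\operatorname{wt}(\mathbf v) \equiv 0 \pmod 4$ (a doubly-even vector) that can be adjoined while preserving double-evenness, i.e. one also needs $\mathbf u \cdot \mathbf v$ to stay even for all $\mathbf u$ in the enlarged code. The heart of the matter is a parity/counting argument on the quotient $\mathcal C^\perp / \mathcal C$: this space carries the induced nondegenerate symmetric bilinear form, and on it the map $\mathbf v \mapsto \tfrac12\operatorname{wt}(\mathbf v) \bmod 2$ behaves like a quadratic form whose associated bilinear form is the inner product. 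The existence of a nonzero doubly-even vector in each nontrivial stage, and hence the ability to continue until reaching a self-dual code, follows from the theory of quadratic forms over $\mathbb F_2$ (every nondegenerate quadratic space of dimension $\ge 2$ has a nonzero isotropic vector), so the enlargement terminates in a doubly-even self-dual $\mathcal C_{sd}$ of dimension $n/2$.

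For the case $n \equiv 2 \pmod 8$, I would instead derive a contradiction from the Gleason-type mass/weight constraint. If a doubly-even binary self-dual code $\mathcal C_{sd}$ of length $n$ existed, then every codeword would have weight $\equiv 0 \pmod 4$ and the dimension would be $n/2$; but it is a classical fact that doubly-even self-dual binary codes exist only when $n \equiv 0 \pmod 8$ (this follows from summing $\sum_{\mathbf u \in \mathcal C_{sd}} i^{\operatorname{wt}(\mathbf u)}$ two ways, or equivalently from the fact that the weight enumerator must be invariant under the relevant order-$8$ MacWilliams transform, forcing $8 \mid n$). Since $n \equiv 2 \pmod 8$ contradicts $8 \mid n$, no such $\mathcal C_{sd}$ can contain $\mathcal C$, giving the second assertion.

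The main obstacle I anticipate is the existence step in the $n \equiv 0 \pmod 8$ case: ensuring that at each stage one can adjoin a vector that is simultaneously orthogonal to itself \emph{modulo $4$} (doubly-even) and not merely modulo $2$ (self-orthogonal), so that double-evenness is propagated and not just self-orthogonality. Proposition~\ref{prop1} and Lemma~\ref{lm1} guarantee self-orthogonal enlargement, but the delicate point is upgrading this to doubly-even enlargement, which is exactly where the $8 \mid n$ hypothesis enters through the quadratic-form obstruction; I would handle it by the isotropic-vector argument on $\mathcal C^\perp/\mathcal C$ sketched above rather than by an explicit construction.
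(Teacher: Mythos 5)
The paper does not prove this lemma; it is imported verbatim from MacWilliams--Sloane--Thompson \cite{MacWilliams}, so there is no in-paper proof to compare against and your proposal must stand on its own. (Incidentally, the second clause as printed is vacuous: under the hypothesis $4\mid n$ one cannot have $n\equiv 2\pmod 8$; the source result concerns $n\equiv 4\pmod 8$. Your argument for that clause --- doubly-even self-dual binary codes exist only when $8\mid n$, via the Gauss-sum/weight-enumerator identity --- is correct and covers the intended case as well.)

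The first clause is where your proposal has a genuine gap. Your setup is right: since $\mathbf 1\in\mathcal C$ every vector of $\mathcal C^\perp$ has even weight, $q(\mathbf v)=\tfrac12\operatorname{wt}(\mathbf v)\bmod 2$ descends to a well-defined quadratic form on $\mathcal C^\perp/\mathcal C$ (because $\operatorname{wt}(\mathbf v+\mathbf c)=\operatorname{wt}(\mathbf v)+\operatorname{wt}(\mathbf c)-2\lvert\operatorname{supp}(\mathbf v)\cap\operatorname{supp}(\mathbf c)\rvert\equiv\operatorname{wt}(\mathbf v)\pmod 4$ for $\mathbf c\in\mathcal C$), and its polarization is the induced nondegenerate pairing. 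Extending $\mathcal C$ to a doubly-even self-dual code is exactly extending $\{0\}$ to a maximal totally isotropic subspace of $(\mathcal C^\perp/\mathcal C,q)$ of dimension $\tfrac12(n-2k)$. But the fact you invoke to guarantee this --- ``every nondegenerate quadratic space of dimension $\ge 2$ has a nonzero isotropic vector'' --- is false over $\F_2$: the two-dimensional form $x^2+xy+y^2$ (Arf invariant $1$) has no nonzero isotropic vector. Consequently your step-by-step enlargement can stall at the last stage, when $\dim(\mathcal C^\perp/\mathcal C)=2$, and nothing in your argument ever uses $n\equiv 0\pmod 8$; you acknowledge this is the delicate point but then defer back to the same (false) isotropic-vector claim, which is circular. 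What is actually needed is to show the hypothesis $8\mid n$ forces the Arf invariant of $q$ to vanish. A concrete way to close the gap: the MacWilliams identity applied to the doubly-even code $\mathcal C$ gives $\sum_{\mathbf v\in\mathcal C^\perp} i^{\operatorname{wt}(\mathbf v)}=\frac{(1+i)^n}{\lvert\mathcal C\rvert}\sum_{\mathbf u\in\mathcal C}(-i)^{\operatorname{wt}(\mathbf u)}\,i^{0}=(1+i)^n=2^{n/2}i^{n/2}$, so with $N_j$ the number of $\mathbf v\in\mathcal C^\perp$ of weight $\equiv j\pmod 4$ one gets $N_0-N_2=2^{n/2}$ when $n\equiv 0\pmod 8$; since $N_0+N_2=2^{n-k}$, this yields $N_0=2^{n-k-1}+2^{n/2-1}>2^{k}=\lvert\mathcal C\rvert$ whenever $k<n/2$, so a doubly-even vector exists in $\mathcal C^\perp\setminus\mathcal C$ at \emph{every} stage and the induction terminates. (The same identity with $i^{n/2}=-1$ gives the nonexistence statement for $n\equiv 4\pmod 8$.) Without this or an equivalent Arf-invariant computation, the proof of the first clause is incomplete.
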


	\begin{cor}
		Suppose that $\C$ is a self-orthogonal code of length $n$ over $\Z_4$ with type $4^{k_1}2^{0}$. If $n$ is divisible by $8$ and  ${\mbox{Res}}(\C)$ is a self-orthogonal code containing the all-ones vector, then $\C$ can be expanded to a self-dual code $\C'$ of type $4^{k_1'}2^{0}$, where  $k_1'=\frac{n}{2}$.
	\end{cor}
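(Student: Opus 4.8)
The plan is to combine the binary embedding result of Lemma~\ref{self-dual codes lemma} with the $\Z_4$-lifting mechanism of Theorem~\ref{thm-ext-2}. The guiding observation is that to reach type $4^{k_1'}2^0$ one needs the torsion parameter $k_2'=n-2k_1'$ to vanish, which forces $k_1'=\frac{n}{2}$. By Theorem~\ref{thm-ext-2} it therefore suffices to produce a binary doubly-even self-orthogonal code $\mathcal A$ of dimension $\frac{n}{2}$ (hence self-dual) that contains ${\mbox{Res}}(\C)$; once $\mathcal A$ is in hand the conclusion follows by applying that theorem with $k_1'=\frac{n}{2}$.

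First I would record that ${\mbox{Res}}(\C)$ is automatically doubly-even. Since $\C$ is self-orthogonal over $\Z_4$, every codeword $c$ satisfies $c\cdot c\equiv 0\pmod 4$; as an odd entry contributes $1$ and an even entry contributes $0$ to this sum modulo $4$, the Hamming weight of the residue $\bar c$ is divisible by $4$. Hence ${\mbox{Res}}(\C)$ is a doubly-even binary self-orthogonal $[n,k_1]$ code, and by hypothesis it contains the all-ones vector with $n\equiv 0\pmod 8$. This is exactly the setting of the relevant half of Lemma~\ref{self-dual codes lemma}.

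Next I would invoke Lemma~\ref{self-dual codes lemma}: under these hypotheses, ${\mbox{Res}}(\C)$ embeds into a doubly-even binary self-dual code $\mathcal A$. Since a binary self-dual code has dimension $\frac{n}{2}$, this $\mathcal A$ is a doubly-even self-orthogonal $[n,\frac{n}{2}]$ code containing ${\mbox{Res}}(\C)$. Because $k_1\leqslant\frac{n}{2}$ (as ${\mbox{Res}}(\C)$ is binary self-orthogonal), we have $k_1'=\frac{n}{2}\geqslant k_1$, so $\mathcal A$ satisfies every requirement of Theorem~\ref{thm-ext-2}.

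Finally I would apply Theorem~\ref{thm-ext-2} to $\C$ with this $\mathcal A$, obtaining a self-dual code $\C'$ over $\Z_4$ of type $4^{n/2}2^{k_2'}$ where $k_2'=n-2\cdot\frac{n}{2}=0$; that is, $\C'$ has type $4^{n/2}2^0$ with $k_1'=\frac{n}{2}$, as claimed. I do not expect a genuine obstacle here: the entire content is verifying that the hypotheses of the two cited results line up, the only non-cosmetic point being the automatic doubly-evenness of ${\mbox{Res}}(\C)$. After that bookkeeping the corollary is immediate, being a direct composition of Lemma~\ref{self-dual codes lemma} (to build the binary self-dual overcode) and Theorem~\ref{thm-ext-2} (to lift it to a $\Z_4$ self-dual code with the maximal $k_1$).
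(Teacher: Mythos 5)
Your proposal is correct and follows essentially the same route as the paper: both proofs invoke Lemma~\ref{self-dual codes lemma} to embed ${\mbox{Res}}(\C)$ into a binary doubly-even self-dual code and then lift to $\Z_4$, the only difference being that you channel the lifting through Theorem~\ref{thm-ext-2} while the paper cites Lemma~\ref{lem-binquar} directly (which Theorem~\ref{thm-ext-2} itself relies on). If anything, your version is slightly more careful, since Theorem~\ref{thm-ext-2} explicitly guarantees that the resulting self-dual code contains $\C$ as a subcode, a point the paper's two-line proof leaves implicit; your verification that ${\mbox{Res}}(\C)$ is automatically doubly-even is also a correct and worthwhile piece of bookkeeping.
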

	
\begin{proof}
	From the result of Lemma \ref{self-dual codes lemma}, we can obtain the binary doubly-even self-dual code $\C_{sd}$.
	By applying Lemma \ref{lem-binquar}, we can obtain the self-dual code $\C_{sd}'$ over $\Z_4$.
\end{proof}
	
	The following examples are clearly illustrated using Theorem \ref{thm-ext-1} and Theorem~\ref{thm-ext-2}.
	
	\begin{ex}
		Let $n=4$. 	It is known~\cite{classfication of sd code} that there are exactly two self-dual codes over $\Z_4$ of length $4$.
 We have found both from one self-orthogonal code as follows. We start from the all-ones vector $[1 1 1 1]$ which gives a  self-orthogonal code $\C_1$ over $\Z_4$ with $4^{1} 2^{0}$ with generator matrix $G(\C_1)$.
Using the above theorems with $G(\C_1)$, we can expand it into two inequivalent self-dual codes with types $4^{1} 2^{2}$ or $4^{0} 2^{4}$ whose generator matrices are
		
		
		$$
		G(\C_1)= \left[ \begin{array}{cccc}
			1&		1&		1&		1 \end{array} \right]
		\rightarrow \left[ \begin{array}{cccc}
			1&		1&		1&		1\\
			\hline
			0&		2&		2&		0\\
			0&		0&		2&		2\\
		\end{array} \right] \cong \D_{4}^{\oplus},
		$$
		
		$$
		G(\C_1)= \left[ \begin{array}{cccc}
			2 & 0 & 0 & 0 \end{array} \right]
		\rightarrow
		\left[ \begin{array}{cccc}
			2&		0&		0&		0\\
			\hline
			0&		2&		0&		0\\
			0&		0&		2&		0\\
			0&		0&		0&		2\\
		\end{array} \right] = 2I_4.
		$$
		

\end{ex}
	
	\begin{ex}
		Let $n=5$. 	
Then the original self-orthogonal code $\C_1$ with type $4^{i} 2^{j}$ with $i+j=1$ can be expanded into two kinds of inequivalent self-dual codes over $\Z_4$ with type $4^{1} 2^3$ and $4^{0} 2^5$. We can obtain immediately the two inequivalent decomposable codes~\cite{classfication of sd code} as follows.
		$$
		G(\C_1)= \left[ \begin{array}{ccccc}
			1&		1&		1&		1 & 0 \end{array} \right]
		\rightarrow
		\left[ \begin{array}{ccccc}
			1&		1&		1&		1&		0\\
			\hline
			0&		2&		2&		0&		0\\
			0&		0&		2&		2&		0\\
			0&		0&		0&		0&		2\\
		\end{array} \right] \cong \D_{4}^{\oplus}\oplus \mathscr{A} _1,
		$$
		$$ G(\C_1)= \left[ \begin{array}{ccccc}
			2 & 0 & 0 & 0 & 0 \end{array} \right]
		\rightarrow  2I_5 .$$

	\end{ex}

	\begin{ex} \label{eg-n-6}
		Let $n=6$. It is known~\cite{classfication of sd code} that there are exactly three self-dual codes over $\Z_4$ of length $6$. Two of them must be decomposable since each is  the direct sum of a self-dual code over $\Z_4$ of length $4$ and the trivial self-dual code of length 1. Any self-orthogonal code $\C_1$ with type $4^i2^j$ where $i+j=1$ can be expanded into three inequivalent self-dual codes over $\Z_4$ with type $4^{0}2^6$, $4^{1}2^4$  and $4^{2}2^2$ as follows.
		$$  2I_6 $$,
		$$G(\C_1)=
		\left[ \begin{array}{cccccc}
			1&		1&		1&		1 &    0 &     0  \end{array} \right]
		\rightarrow
		\left[ \begin{array}{cccccc}
			1&		1&		1&		1&		0&		0\\
			\hline
			0&		2&		2&		0&		0&		0\\
			0&		0&		2&		2&		0&		0\\
			0&		0&		0&		0&		2&		0\\
			0&		0&		0&		0&		0&		2\\
		\end{array} \right] \cong \D_{4}^{\oplus}\oplus \mathscr{A}_1^{2},
		$$
		$$
		G(\C_1)=
		\left[ \begin{array}{cccccc}
			1&		0&		1&		1 &    2 &     3  \end{array} \right]
		\rightarrow
		\left[ \begin{array}{cccccc}
			1&		0&		1&		1 &    2 &     3  \\
			\hline
			2&		1&		0&		1&		1&		1\\
		\end{array} \right]
		\rightarrow
		\left[ \begin{array}{cccccc}
			1&		0&		1&		1&		2&		3\\
			2&		1&		0&		1&		1&		1\\
			\hline
			0&		0&		2&		0&		2&		2\\
			0&		0&		0&		2&		0&		2\\
		\end{array} \right] =G(\C_{sd})
		.$$
		Note that$G(\C_{sd})$ generates the self-dual code of length 6 with largest minimum Lee distance $d=4$.
		
	\end{ex}

	In Theorem \ref{thm-ext-2}, we only consider one condition that $\C$ is a free $\Z_4$-module. Next, we consider more complicated situation when $k_2 \ne 0$.
	
	\begin{thm} \label{thm-ext-3}
		Suppose that $\C$ is a self-orthogonal code of length $n$ over $\Z_4$ with type $4^{k_1}2^{k_2}$ in a standard form, where $k_2 \ne 0$. If there exists a binary  doubly-even self-orthogonal code $\C_{de}$ with dimension $k_1'>k_1$ such that ${\text{Res}}(\C) \subseteq \C_{de}$ and $\C_{de} \subseteq {\text{Tor}}(\C)^\perp$, then $\C$ can be expanded into many  self-dual codes $\C_{sd}$ with  type $4^{k_1'}2^{n-2k_1'}$, such that Res$(\C_{sd})\cong \C_{de}$.
	\end{thm}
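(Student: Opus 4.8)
The plan is to adapt the two-stage argument of Theorem~\ref{thm-ext-2} to the non-free case, the one genuinely new point being that orthogonality against the torsion part comes for free from the hypothesis $\C_{de}\subseteq\text{Tor}(\C)^\perp$. Write $\C$ in standard form with generator matrix
$$
G=\begin{bmatrix} I_{k_1} & A & B_1+2B_2 \\ O & 2I_{k_2} & 2C \end{bmatrix},
$$
so that $\text{Res}(\C)=\langle[\,I_{k_1}\ A\ B_1\,]\rangle$ and the last $k_2$ rows are $2\tau_1,\dots,2\tau_{k_2}$ with $\tau_\ell\in\text{Tor}(\C)$. Since $\text{Res}(\C)\subseteq\C_{de}$, after a permutation I may choose binary generators $\bar\beta_1,\dots,\bar\beta_{k_1'-k_1}$ of a complement of $\text{Res}(\C)$ in $\C_{de}$, row-reduced so that each $\bar\beta_i$ has zero in its first $k_1$ coordinates. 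I will lift each $\bar\beta_i$ to a vector $\beta_i\in\Z_4^n$ with residue $\bar\beta_i$ and set $\C'=\langle\,\C,\ \beta_1,\dots,\beta_{k_1'-k_1}\,\rangle$; then $\C\subseteq\C'$ and $\text{Res}(\C')=\C_{de}$ automatically, so it remains only to choose the lifts so that $\C'$ is self-orthogonal.

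Because $\C$ is already self-orthogonal, self-orthogonality of $\C'$ reduces to three families of conditions modulo $4$. First, the $\beta_i$ must be self- and mutually orthogonal: as $\C_{de}$ is doubly-even, each $\bar\beta_i\cdot\bar\beta_i\equiv0\pmod4$, and as $\C_{de}$ is self-orthogonal each $\bar\beta_i\cdot\bar\beta_j$ is even; hence exactly as in Lemma~\ref{lem-binquar} and step~(1) of Theorem~\ref{thm-ext-2} I can add suitable $2$-valued corrections to the $\beta_i$ that kill all the off-diagonal $2$'s, and the $2^{((k_1'-k_1)^2+(k_1'-k_1))/2}$ admissible choices here are what produce the ``many'' codes. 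Second, each $\beta_i$ must be orthogonal to the free rows $\rho_1,\dots,\rho_{k_1}$ of $G$; since $\rho_j$ carries the leading $1$ in coordinate $j$ while every $\beta_i$ has $0$ there, changing the coordinate-$j$ entry of $\beta_i$ from $0$ to $2$ alters $\beta_i\cdot\rho_j$ by exactly $2$ and leaves $\beta_i\cdot\rho_{j'}$ ($j'\ne j$) untouched, so I can record a $\{0,2\}$-matrix $F$ in the first $k_1$ columns of the $\beta_i$ that cancels every (necessarily even) residual, just as in step~(2) of Theorem~\ref{thm-ext-2}.

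The third family is orthogonality of each $\beta_i$ to the torsion rows $2\tau_\ell$, and this is automatic: $\beta_i\cdot(2\tau_\ell)\equiv2(\bar\beta_i\cdot\tau_\ell)\pmod4$, and $\bar\beta_i\cdot\tau_\ell=0$ over $\F_2$ precisely because $\bar\beta_i\in\C_{de}\subseteq\text{Tor}(\C)^\perp$. I must also check that the corrections of the previous paragraph do not conflict: every correction is a multiple of $2$, so all residues (and hence $\text{Res}(\C')=\C_{de}$) are unchanged, and the $\{0,2\}$-entries of $F$ sit in coordinates where the $\beta_i$ have zero residue, so any two of them contribute only multiples of $4$ to a mutual inner product of new rows, leaving the step-(1) orthogonalities intact. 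Thus the lifts can be fixed simultaneously, and $\C'$ is a self-orthogonal $\Z_4$-code of type $4^{k_1'}2^{k_2''}$ containing $\C$, with $\text{Res}(\C')=\C_{de}$.

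Finally I apply Theorem~\ref{thm-ext-1} to $\C'$ to obtain a self-dual code $\C_{sd}\supseteq\C'\supseteq\C$ of type $4^{k_1'}2^{n-2k_1'}$. Since that theorem only appends rows $2\alpha$ with $\alpha\in\text{Res}(\C')^\perp\setminus\text{Tor}(\C')$, it does not disturb the residue code, so $\text{Res}(\C_{sd})=\text{Res}(\C')\cong\C_{de}$, which is the claim; varying the choices in the first stage (and in Theorem~\ref{thm-ext-1}) yields many such $\C_{sd}$. I expect the main obstacle to be the bookkeeping of the middle two paragraphs --- showing that the three orthogonality families can be enforced at once without any correction undoing another, and in particular that retaining the original free rows of $\C$ verbatim (with their $2B_2$ terms) is compatible with the $F$-adjustment. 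Everything else is inherited from Theorems~\ref{thm-ext-1} and~\ref{thm-ext-2} and Lemma~\ref{lem-binquar}.
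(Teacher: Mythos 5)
Your proposal is correct and follows essentially the same route as the paper's proof: the same standard-form setup, the same two-stage modification (a Lemma~\ref{lem-binquar}-type correction $E$ for the new rows among themselves, then a $\{0,2\}$-matrix $F$ in the first $k_1$ columns for orthogonality against the free rows), the same observation that orthogonality to the torsion rows is automatic from $\C_{de}\subseteq{\text{Tor}}(\C)^\perp$, and the same final appeal to Theorem~\ref{thm-ext-1}. Your explicit check that the two corrections do not interfere (all adjustments are multiples of $2$ placed in coordinates where the new rows' residues vanish) is a welcome refinement of a point the paper leaves implicit.
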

	
	\begin{proof}
		Let $G$ be a generator matrix of the code $\C$, $G_1$ and $G_2$ be the generator matrices of Res($\C$) and $\C_{de}$, respectively. Since $\C_{de} \subseteq {\mbox{Tor}}(\C)^\perp$,
		the row vectors  belonging to the part of $\C_{de}~\backslash ~{\mbox{Res}}(\C)$ are always orthogonal to $2\bf u$ over $\Z_4$, where
		${\bf u} \in {\mbox{Tor}}(\C)~\backslash ~{\mbox{Res}}(\C)$.
		
		Since we can assume that
		$
		G=\left[ \begin{matrix}
			I_{k_1}&		A&	      	B_1+2B_2\\
			\bf{0}      &        2I_{k_2}&    2C     \\
		\end{matrix} \right],
		$ where $A$, $B_1$ and $B_2$ are binary matrices.
		Then  $G_1$ is equivalent to $
		\left[ \begin{matrix}
			I_{k_1}&		A&		B_1\\
		\end{matrix} \right],
		$
		and we can assume that $G_2$ is
		\begin{equation} \label{4}
			G_2=\left[ \begin{matrix}
				I_{k_1}&		A&		B_1\\
				\bf{0}&		I_{k_1'-k_1}&		D,\\
			\end{matrix} \right].
		\end{equation}
		where $D$ is a binary matrix.
		We modify the last  $k_1'-k_1$ rows of the generator matrix (\ref{4}) by following two steps:

		\begin{enumerate}[(1)]
			\item Firstly, according to Lemma \ref{lem-binquar} we change the matrix $\left [\begin{matrix}
				I_{k_1'-k_1} & D
			\end{matrix}\right]$ to $\left [\begin{matrix}
				E & D
			\end{matrix}\right]$ such that the any two rows (not necessarily distinct) in this matrix $\left [\begin{matrix}
				E & D
			\end{matrix}\right]$ are orthogonal over $\Z_4$.
			We have $2^{\frac{\left( k_1'-k_1 \right) ^2+\left( k_1'-k_1 \right)}{2}}$ different options to have different matrices $E$.
			Then we obtain $G_3=\left[ \begin{matrix}
				I_{k_1}&		A&		B_1\\
				\bf{0}&		E&		D\\
			\end{matrix} \right].$ \\
			\item 	Secondly, let
			\begin{equation*}
				G_4=\left[ \begin{matrix}
					I_{k_1}&		A&		B_1+2B_2\\
					\bf{0}&		E&		D\\
				\end{matrix} \right].
			\end{equation*}
Using the same argument as in the proof of Theorem~\ref{thm-ext-2}.
we obtain the following matrix $G_5$:
			\begin{equation*}
				G_5=\left[ \begin{matrix}
					I_{k_1}&		A&		B_1+2B_2\\
					F&		E&		D\\
				\end{matrix} \right].
			\end{equation*}
		\end{enumerate}
	
	If we let 	\begin{equation*}
		G_6=\left[ \begin{matrix}
			I_{k_1}&		A&		B_1+2B_2\\
			F&		E&		D\\
			\bf{0} &   2I_{k_2}&  2C \\
		\end{matrix} \right],
	\end{equation*}
	then $G_6$ can generate the self-orthogonal code $\C'$ containing $\C$. If $\C'$ is not self-dual, we can apply Theorem \ref{thm-ext-1} to get a self-dual code $\C_{sd}$.
	\end{proof}
	
	In fact, it is possible to find a  binary maximal doubly-even $\C_m$ to gain a final self-dual code with increased type for $k_1$. For any other situations, we will achieve the greatest number for $k_1'$  with considering a binary maximal doubly-even code $\C_m$ containing the original code $\C$. Then we will have following corollary.
	
	\begin{cor}
			Suppose that $\C$ is a self-orthogonal code of length $n$ over $\Z_4$ with type $4^{k_1}2^{k_2}$ in a standard form, where $k_2 \ne 0$. If there exists a binary maximal doubly-even self-orthogonal code $\C_m$ with dimension $k_1'>k$ such that ${\text{Res}}(\C) \subseteq \C_m$ and $\C_m \subseteq {\text{Tor}}(\C)^\perp$, then $\C$ can be expanded into many self-dual codes $\C_{sd}$ with type $4^{k_1'}2^{n-2k_1'}$ with largest dimension for residue codes, such that Res$(\C_{sd})\cong \C_m$
	\end{cor}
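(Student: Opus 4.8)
The final statement to prove is the Corollary following Theorem \ref{thm-ext-3}:

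The plan is to split the proof into a construction step and an optimality step, since the construction itself is already covered by Theorem~\ref{thm-ext-3} and the only genuinely new content is the claim that the residue dimension $k_1'$ is the \emph{largest} attainable. First I would apply Theorem~\ref{thm-ext-3} with the choice $\C_{de} = \C_m$: the hypotheses ${\text{Res}}(\C) \subseteq \C_m \subseteq {\text{Tor}}(\C)^{\perp}$ and $\dim \C_m = k_1' > k_1$ are exactly what that theorem requires, and Lemma~\ref{lem-binquar} supplies the $2^{((k_1'-k_1)^2+(k_1'-k_1))/2}$ choices of the matrix $E$, so that indeed ``many'' self-dual codes $\C_{sd}$ of type $4^{k_1'}2^{n-2k_1'}$ with ${\text{Res}}(\C_{sd}) \cong \C_m$ are produced.

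Next I would prove the optimality by bounding the residue dimension of an \emph{arbitrary} self-dual extension. Let $\D$ be any self-dual code over $\Z_4$ with $\C \subseteq \D$. Reducing codewords modulo $2$ gives ${\text{Res}}(\C) \subseteq {\text{Res}}(\D)$, while $2\mathbf c \in \C \subseteq \D$ forces $\mathbf c \in {\text{Tor}}(\D)$, hence ${\text{Tor}}(\C) \subseteq {\text{Tor}}(\D)$. Because $\D$ is self-dual over $\Z_4$, its torsion code is the binary dual of its residue code, ${\text{Tor}}(\D) = {\text{Res}}(\D)^{\perp}$, and ${\text{Res}}(\D)$ is doubly-even self-orthogonal. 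Dualizing ${\text{Tor}}(\C) \subseteq {\text{Res}}(\D)^{\perp}$ yields ${\text{Res}}(\D) \subseteq {\text{Tor}}(\C)^{\perp}$. Thus every self-dual extension $\D$ of $\C$ produces a binary doubly-even self-orthogonal code ${\text{Res}}(\D)$ lying in the interval
\[
{\text{Res}}(\C) \subseteq {\text{Res}}(\D) \subseteq {\text{Tor}}(\C)^{\perp},
\]
and $\dim {\text{Res}}(\D)$ equals the exponent of $4$ in the type of $\D$.

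Finally I would conclude by maximality: since $\C_m$ is a maximal doubly-even self-orthogonal code lying in the same interval $[\,{\text{Res}}(\C),\ {\text{Tor}}(\C)^{\perp}\,]$, no doubly-even self-orthogonal code in that interval has dimension exceeding $\dim \C_m = k_1'$, so $\dim {\text{Res}}(\D) \le k_1'$ for every self-dual extension $\D$, and the code built from $\C_m$ realizes the largest residue dimension. The main obstacle, and the step I would treat most carefully, is precisely this last identification: ``maximal'' must be read as ``of maximum dimension among doubly-even self-orthogonal codes inside ${\text{Tor}}(\C)^{\perp}$ containing ${\text{Res}}(\C)$'', because an inclusion-maximal doubly-even self-orthogonal code need not attain the maximum dimension (adjoining a vector of weight $\equiv 2 \pmod 4$ can be blocked even below the half-length bound). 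I would therefore either strengthen the hypothesis to require $\C_m$ of maximum dimension in this interval, or verify for the codes of interest that the chosen $\C_m$ does attain this maximum; this is exactly what guarantees that $k_1'$ is largest.
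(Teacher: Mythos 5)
Your proposal is correct, and it supplies substantially more than the paper does: the paper's entire proof is the single sentence ``The proof follows from the definition of binary maximal doubly-even code $\C_m$.'' Your construction step (invoking Theorem~\ref{thm-ext-3} with $\C_{de}=\C_m$) is exactly the intended reading, but your optimality step is genuinely new content relative to the paper. In particular, the chain ${\text{Res}}(\C)\subseteq{\text{Res}}(\D)\subseteq{\text{Tor}}(\C)^{\perp}$ for an arbitrary self-dual extension $\D$ --- obtained from ${\text{Tor}}(\C)\subseteq{\text{Tor}}(\D)={\text{Res}}(\D)^{\perp}$ and the fact that ${\text{Res}}(\D)$ is doubly-even self-orthogonal --- is the step that actually makes ``largest dimension for residue codes'' a theorem rather than an assertion, and the paper nowhere records it. Your closing caveat is also well taken and identifies a real imprecision in the statement: the paper's own Corollary~\ref{doubly-even self-orthogonal code with certain dimension} shows that inclusion-maximal doubly-even self-orthogonal codes of the same length can have different dimensions (e.g.\ $n/2$ versus $n/2-1$ when $n\equiv 0\pmod 8$), so ``maximal'' must indeed be read as ``of maximum dimension among doubly-even self-orthogonal codes in the interval $[{\text{Res}}(\C),\,{\text{Tor}}(\C)^{\perp}]$'' for the optimality claim to be literally true. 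Net: your route is the paper's intended argument made rigorous, plus a correction to the hypothesis that the paper should arguably adopt.
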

	
	\begin{proof}
		The proof follows from the definition of binary maximal doubly-even code $\C_m$.
	\end{proof}
	



\medskip

From now on, we consider lengths $n=6, 7, 8$. In what follows, from Examples  \ref{example-n=6} to \ref{example-n=8-k8}, we have succeeded in constructing several self-dual codes beginning with a self-orthogonal code $\C$ over $\Z_4$ generated by one vector over $\Z_4$. Interestingly, these self-dual codes are equivalent to one of the indecomposable self-dual codes mentioned in \cite{classfication of sd code}. In this sense, our work recovers all indecomposable self-dual codes from a given self-orthogonal code $\C$  with length $n=6,7,8$  over $\Z_4$. The following table describes all the known self-dual codes of lengths 6 to 8 using our theorems.
\begin{table}[htbp]
	\caption{ Summary of all indecomposable self-dual codes over $\Z_4$ with length $n=6,7,8$ based on our algorithms} 
	\label{notations}
	\centering
	\begin{tabular}{ccccc}
		\toprule
		Codes &  Length &  Type  &Starting vectors \\
		\midrule 
		$\D_{6}^{\oplus}$ &  6 & $4^22^2 $  &  111300 \\
		\midrule
		$\mathcal{E}_{7}^{+}$ & 7 &  $4^32^1$ &  1003110 \\
		\midrule
	\makecell[c]{$\mathcal{O}_8$ \\  $\mathcal{E}_8$} &8 & $4^4$ & 10111200, 01110320  \\
	\midrule
		$\mathcal{L}_8$ & 8 &  $4^32^2$ &  10101320, 01113020 \\
		\midrule
		$\D_{8}^{\oplus}$ &8 &  $4^32^2$ & 10113000\\
		\midrule
		$\mathcal{K}'_8$ &8 &  $4^22^4$  & 10111002 \\
		\midrule
		$\mathcal{K}_8$ &8 & $4^12^6$ & 11111111 \\
		\toprule
	\end{tabular}
\end{table}

	\begin{ex}\label{example-n=6}
		There is only one indecomposable self-dual code with length $n=6$ over $\Z_4$ named as $\D_{6}^{\oplus}$ with type $4^2 2^2$ from  \cite{classfication of sd code}. We aim to consider constructing it from a self-orthogonal code $\C$ obtained from the first row of the generator matrix of $\D_{6}^{\oplus}$. Hence this example has a form different from the indecomposable form in Example~\ref{eg-n-6} although two codes are equivalent.

		We start from a self-orthogonal code $\mathcal{C}$ of length $n=6$ over $\Z_4$ with generator matrix in a standard form:
		$$G(\C)= \left[ \begin{array}{ccccccc}
			1&		1&		1&		3&   0 &     0      \\
		\end{array} \right].$$
		There is a binary maximal doubly-even code $\C_m$ such that Res($\C$) $\subseteq  \C_m$ with generator matrix $G_2$, $$
		G_2 =\left[ \begin{matrix}
			1&		1&		1&		1&		0&		0\\
			0&		1&		1&		0&		1&		1\\
		\end{matrix} \right].$$
According to step 4 in Algorithm \ref{algorithm2}, we obtain $G_3$ as
$$
G_3 =\left[ \begin{matrix}
	1&		1&		1&		3&		0&		0\\
	2&		1&		1&		0&		1&		1\\
\end{matrix} \right].$$
  The matrix $G_3$ generates a self-orthogonal code $\C'$ containing $\C$. By applying Theorem \ref{thm-ext-1} with the help of Magma \cite{magma}, we can obtain two codewords $\textbf{v}_1=[2,0,2,0,2,0]$ and $\textbf{v}_2=[0,0,0,0,2,2]$ such that they are contained in $\text{Res}(\C')^\bot~\backslash~\text{Res}(\C')$.
According to step 7 in Algorithm \ref{algorithm2},  we obtain
  $$
  G_4 =\left[ \begin{matrix}
  	1&		1&		1&		3&		0&		0\\
  	2&		1&		1&		0&		1&		1\\
  	2&		0&		2&		0&		2&		0\\
  	0&		0&		0&		0&		2&		2\\
  \end{matrix} \right].$$
The matrix $G_4$ generates the self-dual code $\C_{sd}$ which contains $\C$ as a subcode with the minimum Lee distance $d=4$. The symmetric weight enumerator for $\C_{sd}$ is $a^6 + 3a^4c^2 + 8a^3c^3 + 12a^2b^4 + 3a^2c^4 + 24ab^4c + 12b^4c^2 +c^6$, which is the same as $\D_{6}^{\oplus}$, so we can say $\C_{sd}$ is equivalent to $\D_{6}^{\oplus}$.
		\end{ex}

		\begin{ex}\label{example-n=7}
There is only one indecomposable self-dual code with length $n=7$ over $\Z_4$ named as $\mathcal{E}_{7}^{+}$ with type $4^3 2^1$ from  \cite{classfication of sd code}. We aim to construct it from a self-orthogonal code $\C$ obtained from the first row of the generator matrix of $\mathcal{E}_{7}^{+}$.
	We start from the self-orthogonal code $\C$ of length $n=8$ over $\Z_4$ with generator matrix $G(\C)$ in a standard form:
$$G(\C)= \left[ \begin{array}{cccccccc}
	1&		0&		0&		3&    1 &     1     & 0   \\
\end{array} \right].$$			
There is a binary doubly-even code $\C_{de}$ such that 	Res($\C$) $\subseteq  \C_{de}$  with generator matrix $G_2$,
$$G_2= \left[ \begin{array}{ccccccccc}
	1&		0&		0&		1&    1 &     1    & 0    \\
	0&		1&		0&		1&    1 &	  0    & 1   \\
	0&      0&      1&      0&    1 &     1   & 1    \\
\end{array} \right].$$	
By using the step 4 in Algorithm \ref{algorithm2}, we choose matrix 	\begin{equation*}
	M=\left[\begin{matrix}
		0 & 0 \\
		1 & 0  \\
	\end{matrix}\right]
\end{equation*}
to obtain the matrix $G_3$ as
$$G_3=\left[ \begin{array}{ccccccccc}
	1&		0&		0&		3&    1 &     1    & 0   \\
	0&		1&		0&		1&    1 &	  0    & 1    \\
	2&      2&      1&      0&    1 &     1   & 1    \\
\end{array} \right].$$
The matrix $G_3$ generates a self-orthogonal code $\C'$ containing $\C$. By applying Theorem \ref{thm-ext-1} with Magma \cite{magma}, we can obtain one codeword $\textbf{v}=[1,1,1,1,1,1,1,1]$ such that it is contained in $\text{Res}(\C')^\bot~\backslash~\text{Res}(\C')$.
According to step 7 in Algorithm \ref{algorithm2},  we obtain
$$G_4=\left[ \begin{array}{ccccccccc}
1&		0&		0&		3&    1 &     1    & 0   \\
0&		1&		0&		1&    1 &	  0    & 1    \\
2&      2&      1&      0&    1 &     1   & 1    \\
2&      2&      2&      2&    2 &     2   & 2 \\
\end{array} \right].$$
The $G_4$ generates a self-dual code $\C_{sd}$ over $\Z_4$
with Lee minimum distance $d=4$. The symmetric weight enumerator for $\C_{sd}$ is $a^7 + 7a^4c^3 + 14a^3b^4 + 7a^3c^4 + 42a^2b^4c + 42ab^4c^2 +14b^4c^3 + c^7$, which is the same as $\mathcal{E}_{7}^{+}$, so we can say $\C_{sd}$ is equivalent to $\mathcal{E}_{7}^{+}$.

		\end{ex}

	\begin{ex}\label{example-n=8-L8}
		We start from the self-orthogonal code $\C$ consisting of the first two rows of $\mathcal{L}_8$ in \cite{classfication of sd code}.
 We want to show that we can expand $\C$ into a self-dual code $\C_{sd}$ which is equivalent to $\mathcal{L}_8$ with type $4^3 2^2$.
		The self-orthogonal code $\C$ of length $n=8$ over $\Z_4$  has generator matrix $G$ in a standard form:
$$G(\C)= \left[ \begin{array}{ccccccccc}
1&		0&		1&		0&    1 &     3    & 2 & 0   \\
0&		1&		1&		1&    3 &	  0    & 2 & 0   \\
\end{array} \right].$$
There is a binary doubly-even code $\C_{de}$ such that 	Res($\C$) $\subseteq  \C_{de}$  with generator matrix $G_2$,
$$G_2= \left[ \begin{array}{ccccccccc}
	1&		0&		1&		0&    1 &     1    & 0 & 0   \\
	0&		1&		1&		1&    1 &	  0    & 0 & 0   \\
	0&      0&      1&      0&    1 &     0   & 1 & 1   \\
\end{array} \right].$$
By using the step 4 in Algorithm \ref{algorithm2}, we choose matrix 	\begin{equation*}
	M=\left[\begin{matrix}
		0 & 0 \\
		1 & 0  \\
	\end{matrix}\right]
\end{equation*}
to obtain the matrix $G_3$ as
	$$G_3=\left[ \begin{array}{ccccccccc}
	1&		0&		1&		0&    1 &     3    & 2 & 0   \\
	0&		1&		1&		1&    3 &	  0    & 2 & 0   \\
	0&      2&      1&      0&    1 &     0   & 1 & 1   \\
\end{array} \right].$$
The matrix $G_3$ generates a self-orthogonal code $\C'$ containing $\C$. By applying Theorem \ref{thm-ext-1} with Magma \cite{magma},  we can obtain two codewords $\textbf{v}_1=[0,0,1,0,1,0,0,0]$ and $\textbf{v}_2=[0,0,0,1,1,1,1,0]$ such that they are all contained in $\text{Res}(\C')^\bot~\backslash~\text{Res}(\C')$.
According to step 7 in Algorithm \ref{algorithm2}, we obtain
$$G_4=\left[ \begin{array}{ccccccccc}
	1&		0&		1&		0&    1 &     3    & 2 & 0   \\
	0&		1&		1&		1&    3 &	  0    & 2 & 0   \\
	0&      2&      1&      0&    1 &     0   & 1 & 1   \\
	0&      0&      2&      0&    2 &     0    & 0 & 0   \\
	0&      0&      0&      2&    2 &     2     & 2 & 0 \\
\end{array} \right].$$
The $G_4$ generates a self-dual code $\C_{sd}$ over $\Z_4$
with Lee minimum distance $d=4$. The symmetric weight enumerator for $\C_{sd}$ is $a^8 + 4a^6c^2 + 22a^4c^4 + 96a^3b^4c + 4a^2c^6 + 96ab^4c^3 + 32b^8+ c^8$, which is the same as $\mathcal{L}_8$, so $\C_{sd}$ is equivalent to $\mathcal{L}_8$.
	\end{ex}

		\begin{ex}\label{example-n=8-O8E8}
			There are exactly two indecomposable free self-dual codes over $\Z_4$ of length $n=8$, i.e., $\mathcal{O}_8$ and $\mathcal{E}_8$ from \cite{classfication of sd code}. We want to construct them from one self-orthogonal code $\C$ by applying Theorem \ref{thm-ext-2}.
			We start from the self-orthogonal code  $\mathcal{C}$ of length $n=8$ over $\Z_4$ with the below generator matrix $G(\C)$ in a standard form which is easy to construct.
			$$G(\C)= \left[ \begin{array}{ccccccccc}
	1&		0&		1&		1&    1 &     2    & 0 & 0   \\
	0&		1&		1&		1&    0 &	  3    & 2 & 0   \\
			\end{array} \right].$$
			Then, we have	$$G({\mbox{Res}}(\C))=\left[ \begin{array}{ccccccccc}
	1&		0&		1&		1&    1 &     0    & 0	&  0   \\	0&		1&		1&		1&    0 &	  1    & 0  &  0    \\
			\end{array} \right].$$
			By Lemma \ref{original lemma}, Res($\C$) can be expanded into a maximal doubly-even self-orthogonal code $\mathcal{C}_m$ over $\F_2$. Here we can assume generator matrix of $\mathcal{C}_m$ as:
			$$G(\mathcal{C}_m)=\left[ \begin{array}{cccccccc}
	1&		0&		1&		1&    1 &     0    & 0  & 0   \\
	0&		1&		1&		1&    0 &	  1    & 0  & 0    \\
	0&      0&      1&      0&    1 &     1    & 0  & 1  \\
	0&      0&      0&      1&    1 &     1    & 1  & 0  \\
			\end{array} \right].$$
From the step 4 in Algorithm \ref{algorithm2}, we can obtain $8$ suitable matrices $M=[m_{ij}]$. We pick two different matrix $M_1$ and $M_2$ as follows
			\begin{equation*}
				M_1=\left[\begin{matrix}
					1 & 0 \\
					1 & 1  \\
				\end{matrix}\right],
			M_2=\left[\begin{matrix}
				1 & 1 \\
				0 & 1  \\
			\end{matrix}\right].
			\end{equation*}
Then, by Lemma \ref{lem-binquar}, we obtain $G_1$ and $G_2$, respectively as follows:
			 \begin{equation*}
			 	G_1=\left[ \begin{matrix}
1&		0&		1&		1&    1 &     2    & 0  & 0  \\
0&		1&		1&		1&    0 &	  3    & 2  & 0  \\
2&      2&     	3&		0&	  1 &	  1    & 0  & 1 \\
0&      2&      2&      3&    1 &     1    & 1  &0  \\
			 	\end{matrix} \right],
			 	G_2=\left[ \begin{matrix}
1&		0&		1&		1&    1 &     2    & 0  & 0  \\
0&		1&		1&		1&    0 &	  3    & 2  & 0  \\
0&      0&     	3&		2&	  1 &	  1    & 0  & 1 \\
2&      0&      0&      3&    1 &     1    & 1  &0  \\
			 	\end{matrix} \right].
			 \end{equation*}	   		
The matrix $G_1$ generates a self-dual code $\C_{sd1}$ over $\Z_4$ with the minimum Lee distance $d=4$. The symmetric weight enumerator for $\C_{sd1}$ is   $a^8 + 16a^4b^4 + 14a^4c^4 + 48a^3b^4c + 96a^2b^4c^2 + 48ab^4c^3 +
16b^8 + 16b^4c^4 + c^8$, which is the same as $\mathcal{E}_8$, so $\C_{sd1}$  is equivalent to $\mathcal{E}_8$. The matrix $G_2$ generates a self-dual code $\C_{sd2}$ over $\Z_4$ with the minimum Lee distance $d=6$. The symmetric weight enumerator for $\C_{sd2}$ is $a^8 + 14a^4c^4 + 112a^3b^4c + 112ab^4c^3 + 16b^8 + c^8$, which is the same as $\mathcal{O}_8$, so $\C_{sd2}$  is equivalent to $\mathcal{O}_8$.

\end{ex}

		\begin{ex}\label{example-n=8-d8+}
			We start from the self-orthogonal code $\C$ with generator matrix consisting of the first row of generator matrix of $\D_{8}^{\oplus}$ from \cite{classfication of sd code}.
  We want to show that we can expand $\C$ into a self-dual code $\C_{sd}$ which is equivalent to $\D_{8}^{\oplus}$ with type $4^3 2^2$.
The self-orthogonal code  $\mathcal{C}$ of length $n=8$ over $\Z_4$ has the below generator matrix $G(\C)$ in a standard form.
			$$G(\C)= \left[ \begin{array}{ccccccccc}
	1&		0&		1&		1&    3 &     0    & 0 & 0   \\
			\end{array} \right].$$
There is a binary doubly-even code $\C_{de}$ such that 	Res($\C$) $\subseteq  \C_{de}$  with generator matrix $G_2$,
		$$G_2= \left[ \begin{array}{ccccccccc}
1&		0&		1&		1&    1 &     0    & 0 & 0   \\
0&		1&		0&		0&    0 &	  1    & 1 & 1   \\
0&      0&      1&      0&    1 &     1    & 1 & 0   \\
		\end{array} \right].$$
By applying Lemma \ref{lem-binquar}, we choose $
M=\left[ \begin{matrix}
	0&		1\\
	0&		0\\
\end{matrix} \right]
$ to
obtain $G_3$ as
$$G_3= \left[ \begin{array}{ccccccccc}
	1&		0&		1&		1&    3 &     0    & 0 & 0   \\
	2&		1&		2&		0&    0 &	  1    & 1 & 1   \\
	0&      0&      1&      0&    1 &     1    & 1 & 0   \\
\end{array} \right].	$$		
The matrix $G_3$ generates a self-orthogonal code $\C'$ which contains $\C$ as a subcode.  By applying Theorem \ref{thm-ext-1} using Magma \cite{magma},  we can obtain two codewords $\textbf{v}_1=[0,1,0,0,1,0,0,1]$ and $\textbf{v}_2=[0,0,1,1,0,1,0,1]$ such that they are all contained in $\text{Res}(\C')^\bot~\backslash~\text{Res}(\C')$.
According to step 7 in Algorithm \ref{algorithm2}, we obtain
$$G_4= \left[ \begin{array}{ccccccccc}
	1&		0&		1&		1&    3 &     0    & 0 & 0   \\
	2&		1&		2&		0&    0 &	  1    & 1 & 1   \\
	0&      0&      1&      0&    1 &     1    & 1 & 0   \\
    0&		2&		0&		0&    0 &	  0    & 0 & 2   \\
    0&      0&      2&      2&    0 &     2    & 0 & 2    \\
\end{array} \right].$$		
The matrix $G_4$ generates a self-dual code $\C_{sd}$ with  minimum Lee distance $d=4$. The symmetric weight enumerator for $\C_{sd}$ is   $a^8 + 4a^6c^2 + 16a^4b^4 + 22a^4c^4 + 32a^3b^4c + 96a^2b^4c^2 +4a^2c^6 + 32ab^4c^3 + 32b^8 + 16b^4c^4 + c^8$, which is the same as $\D_{8}^{\oplus}$, so $\C_{sd}$ is equivalent to $\D_{8}^{\oplus}$.		
		\end{ex}

\begin{ex}\label{example-n=8-k8'}
		As before we start from the self-orthogonal code $\C$ with generator matrix consisting of the first row of the generator matrix of $\mathcal{K}'_8$~\cite{classfication of sd code}.  We will show that we can expand $\C$ into a self-dual code $\C_{sd}$ which is equivalent to $\mathcal{K}'_8$ with type $^2 2^4$.
	The self-orthogonal code $\mathcal{C}$ of length $n=8$ over $\Z_4$ has the below generator matrix $G(\C)$ in a standard form:
	$$G(\C)= \left[ \begin{array}{ccccccccc}
		1&		0&		1&		1&    1 &     0    & 0 & 2   \\
	\end{array} \right].$$
Using a similar procedure as in Example~\ref{example-n=8-d8+}, we obtain
$$G_4= \left[ \begin{array}{ccccccccc}
	1&		0&		1&		1&    1 &     0    & 0 & 2   \\
2&		1&		0&		0&    0 &	  1    & 1 & 1   \\
0&		2&		0&		0&    0 &	  0    & 0 & 2   \\
0&      0&      2&      0&    2 &     0    & 0 & 0    \\
0&      0&      0&      2&    2 &     0    & 0 & 0  \\
0&      0&      0&      0&    0 &     0    & 2 & 2  \\
\end{array} \right].$$	
The matrix $G_4$ can generate a self-dual code $\C_{sd}$ with minimum Lee distance $d=4$.  The symmetric weight enumerator for $\C_{sd}$ is   $a^8 + 12a^6c^2 + 38a^4c^4 + 64a^3b^4c + 12a^2c^6 + 64ab^4c^3 +64b^8 + c^8$, which is the same as $\mathcal{K}'_8$, so $\C_{sd}$ is equivalent to $\mathcal{K}'_8$.
\end{ex}		

\begin{ex}\label{example-n=8-k8}
	We start from the self-orthogonal code $\C$ obtained from the generator matrix consisting of the first row of the generator matrix of $\mathcal{K}_8$ \cite{classfication of sd code}. We will show that $\C$ can be expanded into a self-dual code $\C_{sd}$ which is equivalent to $\mathcal{K}_8$ with type $4^1 2^6$.
	$$G(\C)= \left[ \begin{array}{ccccccccc}
	1&		1&		1&		1&    1 &     1    & 1 & 1   \\
\end{array} \right].$$
	By applying Theorem \ref{thm-ext-1}, we can easily obtain six codewords $\textbf{v}_1=[0,1,0,0,0,0,0,1]$,$\textbf{v}_2=[0,0,1,0,0,0,0,1]
	$, $\textbf{v}_3=[0,0,0,1,0,0,0,1]$, $\textbf{v}_4=[0,0,0,0,1,0,0,1]$, $\textbf{v}_5=[0,0,0,0,0,1,0,1]$ and $\textbf{v}_6=[0,0,0,0,0,0,1,1]$ such that they are all contained in $\text{Res}(\C')^\bot~\backslash~\text{Res}(\C')$.
	According to step 7 in Algorithm \ref{algorithm2}, we obtain
$$G_1= \left[ \begin{array}{ccccccccc}
	1&		1&		1&		1&    1 &     1    & 1 & 1   \\
	0&		2&		0&		0&    0 &	  0    & 0 & 2   \\
	0&		0&		2&		0&    0 &	  0    & 0 & 2   \\
	0&      0&      0&      2&    0 &     0    & 0 & 2    \\
	0&      0&      0&      0&    2 &     0    & 0 & 2  \\
	0&      0&      0&      0&    0 &     2    & 0 & 2  \\		0&      0&      0&      0&    0 &     0    & 2 & 2  \\
\end{array} \right].$$	
The matrix $G_1$ generates  a self-dual code $\C_{sd}$ with minimum Lee distance $d=4$ which is the same as $\mathcal{K}_8$.
\end{ex}

		Next, we will consider in an ideal situation, the largest value of $k_1'$ in Theorem~\ref{thm-ext-2} and we will give the exact result in Theorem \ref{maximal dimension}.

		\begin{lm}(\cite[Thoerem 1]{Pless}) \label{v}
			Let $V$ be a finite geometry over $GF(q)$ of
			dimension $n$ with basis. Let $v$ be the dimension of a maximal self-orthogonal subspace for $n$ even. Then $v = n/2$ whenever $\left( -1 \right) ^{n/2}$ is a square in $GF(q)$, and $v=n/2-1$ whenever
			$\left( -1 \right) ^{n/2}$ is not a square in $GF(q)$. In case $n$ is odd, $v=\left( n-1 \right) /2$.
		\end{lm}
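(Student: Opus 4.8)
The plan is to translate the statement into the language of quadratic forms and to read off $v$ as a Witt index. A subspace $W \subseteq V = GF(q)^n$ is self-orthogonal precisely when it is totally isotropic for the standard symmetric bilinear form $B(\x,\y)=\sum_i x_i y_i$, equivalently (assuming $q$ odd, so that the bilinear and quadratic pictures coincide) for the quadratic form $Q(\x)=\sum_i x_i^2$ whose Gram matrix is $I_n$. Thus $v$ is exactly the Witt index of $Q$, and the entire lemma reduces to computing the Witt index of the identity form over $GF(q)$. I would record at the outset the two invariants that control the answer: the dimension $n$, and the discriminant $\det(\text{Gram}) = 1$ viewed in $GF(q)^*/(GF(q)^*)^2$.

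First I would establish that every nondegenerate form in at least three variables over $GF(q)$ has a nonzero isotropic vector; this follows from a counting argument (or Chevalley--Warning), since the number of solutions of $\sum a_i x_i^2 = 0$ in $GF(q)^m$ with $m \geq 3$ exceeds $1$. Given such a vector $u$ with $Q(u)=0$, I would produce a companion $w$ so that $H=\langle u, w\rangle$ is a hyperbolic plane and $V = H \perp H^{\perp}$; this reduces the dimension by $2$ while increasing the Witt index by exactly $1$. Iterating this peeling-off (a Witt decomposition) reaches an \emph{anisotropic kernel} of dimension $0$, $1$, or $2$, because over a finite field no anisotropic form exceeds dimension $2$ (a three-variable form is isotropic by the previous step). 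The Witt index $v$ equals the number of hyperbolic planes split off.

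It then remains to identify the kernel, and the discriminant does the bookkeeping. For $n$ odd the kernel is forced to be one-dimensional, so $v=(n-1)/2$ independently of any square condition, since scaling an odd-dimensional form by a nonsquare does not change isotropy. For $n=2m$ even the kernel is either trivial, giving $v=m=n/2$, or the unique two-dimensional anisotropic form, giving $v=m-1=n/2-1$; the trivial case occurs exactly when the whole form is hyperbolic. Since a hyperbolic form of dimension $2m$ has discriminant $(-1)^m$ modulo squares, matching this against our discriminant $1$ shows that $v=n/2$ precisely when $(-1)^{n/2}$ is a square, and $v=n/2-1$ otherwise. I would pin down the sign convention on the base case $n=2$, where $Q=x^2+y^2$ is hyperbolic iff $-1$ is a square, confirming the criterion.

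The main obstacle is the pair of structural inputs behind the Witt decomposition: the existence of isotropic vectors for forms in $\geq 3$ variables, and the resulting bound of $2$ on the dimension of an anisotropic form over $GF(q)$. Once these are in hand the remainder is bookkeeping, but the genuinely delicate point is aligning the discriminant convention with the exact criterion ``$(-1)^{n/2}$ is a square,'' since an off-by-one in the exponent of $-1$ would interchange the two even cases.
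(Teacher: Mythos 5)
The paper offers no proof of this lemma at all --- it is quoted directly from \cite[Theorem 1]{Pless} --- so there is no internal argument to compare yours against; I can only judge the proposal on its own terms.

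Your reduction of $v$ to the Witt index of the form $\sum_i x_i^2$, the peeling off of hyperbolic planes, the bound of $2$ on the dimension of an anisotropic form over a finite field, and the discriminant bookkeeping (a hyperbolic space of dimension $2m$ has discriminant $(-1)^m$, to be matched against the discriminant $1$ of $I_n$) are all correct and together they do prove the lemma --- but only for $q$ odd, an assumption you introduce parenthetically and never discharge. This is a genuine gap rather than a harmless convention, for two reasons: the statement is asserted for every $GF(q)$, and the only place the paper actually invokes the lemma downstream is Corollary~\ref{F2}, which is precisely the case $q=2$. In characteristic $2$ several of your steps break: the quadratic form $\sum_i x_i^2=\bigl(\sum_i x_i\bigr)^2$ is the square of a linear form, so the bilinear/quadratic dictionary you rely on fails; the construction of a hyperbolic partner $w'=w-\frac{Q(w)}{2B(u,w)}\,u$ divides by $2$; and the discriminant-mod-squares classification is vacuous (every element of $GF(2^k)$ is a square) and must be replaced by the Arf invariant or a direct argument. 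The conclusion does still hold there --- a self-orthogonal subspace satisfies $W\subseteq W^{\perp}$, whence $\dim W\le n/2$, and for $n$ even the vectors $e_1+e_2,\,e_3+e_4,\dots,e_{n-1}+e_n$ span a totally isotropic subspace of dimension $n/2$, consistent with $(-1)^{n/2}=1$ always being a square --- but your proof as written does not establish it. Either state the odd-$q$ restriction and supply the even-characteristic case separately, or rerun the induction purely in terms of the symmetric bilinear form, producing isotropic vectors by counting in both characteristics, which is closer to Pless's original argument.
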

		
		\begin{cor} \label{F2}
			Let $V$ be a finite geometry over $GF(2)$ of
			dimension $n$ with basis. Let $v$ be the dimension of a maximal self-orthogonal subspace. Then for $n$ even, $v = n/2$ and for $n$ odd, $v=\left( n-1 \right) /2$.
			
		\end{cor}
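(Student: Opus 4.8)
The plan is to derive the corollary directly from Lemma \ref{v} by specializing the ground field to $GF(2)$. The statement of Lemma \ref{v} has three branches: for $n$ even it distinguishes according to whether $(-1)^{n/2}$ is a square in $GF(q)$, giving either $v = n/2$ or $v = n/2 - 1$, while for $n$ odd it gives $v = (n-1)/2$ unconditionally. So the entire task reduces to determining, when $q = 2$, which branch of the even case is active.

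First I would observe that over $GF(2)$ every element is a square. Indeed, the squaring map $x \mapsto x^2$ is the Frobenius endomorphism, which fixes the prime field $GF(2)$ pointwise; concretely $0^2 = 0$ and $1^2 = 1$, so the set of squares is all of $GF(2)$. In particular, irrespective of the parity of $n/2$, the element $(-1)^{n/2}$ is a square (and in fact equals $1$, since $-1 = 1$ in characteristic $2$). Hence for $n$ even only the first branch of Lemma \ref{v} can occur, yielding $v = n/2$.

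For $n$ odd, Lemma \ref{v} already gives $v = (n-1)/2$ with no hypothesis on the field, so this case transfers verbatim to $GF(2)$. Combining the two cases produces precisely the claimed values of $v$, completing the argument.

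There is essentially no obstacle here: the corollary is a one-line specialization, and the only fact requiring verification is that every element of $GF(2)$ is a square, which is immediate. I would note for the reader that this collapse of the dichotomy is special to characteristic $2$ (more generally, to any field in which $-1$ is a square); over a general field the second branch $v = n/2 - 1$ can genuinely arise, so the simplification should not be taken for granted.
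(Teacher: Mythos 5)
Your argument is correct and is exactly the intended justification: the paper states this corollary without an explicit proof, as an immediate specialization of Lemma \ref{v} to $q=2$, where $(-1)^{n/2}=1$ is a square so only the branch $v=n/2$ survives in the even case. Your observation that every element of $GF(2)$ is a square settles the only point needing verification, so the proposal matches the paper's (implicit) reasoning.
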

		
		Combining Lemma  \ref{original lemma} and Corollary \ref{F2}, we can naturally obtain the following corollary.
		
		\begin{cor} \label{binary maximal self-orthogonal}
			Every binary self-orthogonal $[n, k]$ code $\C$  of length $n$ is contained in a maximal self-orthogonal code $\C_{max}$ of dimension $n/2$ if $n$ is even and of dimension $\left( n-1 \right) /2$ if $n$ is odd.


		\end{cor}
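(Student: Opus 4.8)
The plan is to combine the two ingredients exactly as the lead-in sentence indicates: Lemma~\ref{original lemma} supplies a maximal self-orthogonal code containing $\C$, and Corollary~\ref{F2} pins down the dimension that any such code must have. So the argument is essentially a two-line deduction, and the work lies in matching up the two notions of ``maximal.''

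First I would specialize Lemma~\ref{original lemma} to the field $\F_2$ (i.e.\ take $q=2$). Given the binary self-orthogonal $[n,k]$ code $\C$, the lemma produces a maximal self-orthogonal code $\C_{max}$ with $\C \subseteq \C_{max} \subseteq \C^{\bot}$, where ``maximal'' means $\C_{max}$ cannot be properly enlarged to a strictly larger self-orthogonal code. This step already gives existence of the desired containing code; what remains is only to compute $\dim \C_{max}$.

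Next I would translate the situation into the language of Corollary~\ref{F2}. Viewing $\F_2^n$ as a finite geometry $V$ of dimension $n$ over $GF(2)$ equipped with the standard inner product, a self-orthogonal code is precisely a totally isotropic subspace, and an inclusion-maximal self-orthogonal code corresponds to a maximal isotropic subspace. Corollary~\ref{F2} then asserts that every such subspace has dimension $n/2$ when $n$ is even and $(n-1)/2$ when $n$ is odd. Applying this to $\C_{max}$ forces $\dim \C_{max} = n/2$ for even $n$ and $\dim \C_{max} = (n-1)/2$ for odd $n$, which is the claim.

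The one point that needs care, and the only real obstacle, is the compatibility of the two uses of the word ``maximal.'' Lemma~\ref{original lemma} yields a code that is maximal in the sense of being \emph{inclusion-maximal} (not extendable), whereas Corollary~\ref{F2} reports a single dimension value; these agree only because, for the nondegenerate binary dot product, all inclusion-maximal totally isotropic subspaces share the common dimension given by the Witt index. This is precisely the content of Pless's Lemma~\ref{v}, from which Corollary~\ref{F2} is drawn, so no separate extremal argument is required: the dimension of $\C_{max}$ is determined outright and the corollary follows immediately.
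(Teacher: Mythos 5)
Your argument is correct and follows essentially the same route as the paper: apply Lemma~\ref{original lemma} to embed $\C$ in a maximal self-orthogonal code, then read off its dimension from Corollary~\ref{F2}. The compatibility-of-``maximal'' point you raise is also addressed in the paper, just phrased as a short contradiction argument (any self-orthogonal code containing $\C_m$ also contains $\C$, so maximal-over-$\C$ coincides with inclusion-maximal) rather than via the common dimension of maximal isotropic subspaces.
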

		\begin{proof}
By Lemma~\ref{original lemma}, $\C$ is contained in a maximal self-orthogonal $[n, k' \ge k]$ code $\C_m$ such that $\C_m\subseteq \C^{\bot}$ as a partial ordering of the chain of self-orthogonal codes containing $\C$. If $\C_m$ itself is not a maximal self-orthogonal code, then there exists a maximal self-orthogonal code $\C_{m'}$ properly containing $\C_m$. Then $\C \subseteq \C_m \subsetneq \C_{m'}$. This is a contradiction since $\C_m$ was a maximal self-orthogonal code containing $\C$. Therefore, $\C_m$ is in fact a maximal self-orthogonal code $\C_{max}$ whose dimension follows from Corollary \ref{F2}.	
		\end{proof}
		
		\begin{thm} (\cite[Thoerem 1.4.6]{HufPle}) \label{doubly-even}
			Let $\C$ be an $[n, k]$ self-orthogonal binary code. Let $\C_0$ be the set of codewords in $\C$ whose weights are divisible by four. Then either:\\
			(i) $\C$ = $\C_0$, or\\
			(ii) $\C_0$ is an $\left[ n,k-1 \right]$
			subcode of $\C$ and $\C=\C_0\cup \C_1$, where $\C_1 = \textbf{x} + C_0$ for any codeword \textbf{x} whose weight is even but not divisible by four. Furthermore, $\C_1$ consists of all codewords of $\C$ whose weights are not divisible by four.
			
		\end{thm}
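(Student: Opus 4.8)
The plan is to prove a structural dichotomy for self-orthogonal binary codes based on whether all codeword weights are divisible by four. Let $\C$ be an $[n,k]$ self-orthogonal binary code, and define $\C_0 = \{ \x \in \C \mid \text{wt}(\x) \equiv 0 \pmod 4 \}$. The first thing I would verify is that $\C_0$ is in fact a \emph{subcode} (i.e., a subspace), which is the crux of the argument. The key algebraic fact is that for any two binary vectors $\x, \y$ we have $\text{wt}(\x + \y) = \text{wt}(\x) + \text{wt}(\y) - 2\,(\x \cdot \y)$, where $\x \cdot \y$ counts the overlap positions. Because $\C$ is self-orthogonal, every codeword satisfies $\x \cdot \x = \text{wt}(\x) \equiv 0 \pmod 2$, so all weights in $\C$ are even, and moreover $\x \cdot \y \equiv 0 \pmod 2$ for all $\x, \y \in \C$.

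With this in hand, I would analyze weights modulo $4$. Writing $\text{wt}(\x) = 2a$ and $\text{wt}(\y) = 2b$ with the self-orthogonality forcing $\x \cdot \y$ even, say $\x \cdot \y = 2c$, the identity gives $\text{wt}(\x+\y) = 2a + 2b - 4c \equiv 2a + 2b \pmod 4$. Hence the map $\C \to \Z/2\Z$ sending $\x \mapsto \tfrac{1}{2}\text{wt}(\x) \bmod 2$ (equivalently, $\text{wt}(\x)/2 \bmod 2$) is additive: it is a group homomorphism from $(\C, +)$ to $(\Z/2\Z, +)$. The set $\C_0$ is precisely the kernel of this homomorphism, so $\C_0$ is a subspace of $\C$.

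Next I would split into the two cases of the statement according to whether this homomorphism is trivial or surjective. If the homomorphism is identically zero, then every codeword has weight divisible by four, giving case (i), $\C = \C_0$. Otherwise the homomorphism is surjective onto $\Z/2\Z$, so its kernel $\C_0$ has index $2$, i.e. $\C_0$ is an $[n, k-1]$ subcode. Fixing any $\x \in \C \setminus \C_0$ (so $\text{wt}(\x)$ is even but $\not\equiv 0 \pmod 4$), the coset decomposition of $\C$ by the index-two subgroup $\C_0$ reads $\C = \C_0 \cup (\x + \C_0)$. Setting $\C_1 = \x + \C_0$, every element of $\C_1$ lies outside the kernel and therefore has weight $\equiv 2 \pmod 4$, which is even but not divisible by four; conversely any codeword of weight not divisible by four lies in the nontrivial coset and hence in $\C_1$. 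This gives the final assertion that $\C_1$ consists of exactly those codewords whose weights are not divisible by four.

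I expect the main obstacle to be the closure argument for $\C_0$, that is, establishing cleanly that weight-mod-$4$ behaves additively. The subtlety is that additivity of $\tfrac12\text{wt}$ modulo $2$ depends essentially on the self-orthogonality of $\C$ (which forces the overlap term $\x \cdot \y$ to be even so that the $-2(\x\cdot\y)$ contribution is divisible by $4$); without self-orthogonality the map would not be a homomorphism and $\C_0$ need not be a subspace. Once the homomorphism property is secured, the remainder is a routine application of the first isomorphism theorem (kernel/index-two coset analysis), and the identification of the coset $\C_1$ with the weight $\equiv 2 \pmod 4$ codewords is immediate from the definition of the kernel.
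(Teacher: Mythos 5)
The paper does not prove this statement; it is quoted directly from Huffman--Pless (Theorem 1.4.6) as a known result. Your argument is correct and is essentially the standard textbook proof: the identity $\mathrm{wt}(\x+\y)=\mathrm{wt}(\x)+\mathrm{wt}(\y)-2\,|\mathrm{supp}(\x)\cap\mathrm{supp}(\y)|$ together with self-orthogonality (which forces the overlap to be even) makes $\x\mapsto \tfrac{1}{2}\mathrm{wt}(\x)\bmod 2$ a homomorphism with kernel $\C_0$, and the dichotomy is the trivial/surjective case split.
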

		
		Since binary doubly-even self-dual codes of length $n$ exist if and only if
		$n\equiv 0 \pmod {8}$~\cite{MacWilliams}, the following statement follows from Corollary \ref{F2} and Theorem \ref{doubly-even}.

		\begin{cor} \label{doubly-even self-orthogonal code with certain dimension}
		If the dimension of a binary maximal doubly-even self-orthogonal code $\C_m$ of length $n$ is $k$, then we have
			$$
			k=\begin{cases}
				\frac{n}{2}-1~~~~~~~~~~~~~~~~~~~~{\rm{if}}~n\equiv 2,4,6 \pmod{8},\\
				\frac{n}{2}~{\rm{or}}~\frac{n}{2}-1~~~~~~~~~~~~~~{\rm{if}}~ n\equiv 0 \pmod {8},\\
				\frac{n-1}{2} ~{\rm{or}}~ \frac{n-3}{2}~~~~~~~~~~~~~ {\rm{if}}~n~{\rm{is~odd}}.
			\end{cases}
			$$
			
		\end{cor}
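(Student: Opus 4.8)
The plan is to compare a maximal doubly-even self-orthogonal code $\C_m$ with an ordinary maximal self-orthogonal code that contains it, and then read off the dimension using Theorem~\ref{doubly-even}. First I would apply Corollary~\ref{binary maximal self-orthogonal}: the code $\C_m$ sits inside some maximal self-orthogonal code $\C_{max}$, whose dimension is $n/2$ when $n$ is even and $(n-1)/2$ when $n$ is odd. Writing $(\C_{max})_0$ for the subcode of $\C_{max}$ consisting of all codewords of weight divisible by four (the notation of Theorem~\ref{doubly-even}), I would check that $(\C_{max})_0$ is doubly-even by construction and self-orthogonal because any subcode of a self-orthogonal code is self-orthogonal, as $(\C_{max})_0\subseteq\C_{max}\subseteq\C_{max}^{\perp}\subseteq(\C_{max})_0^{\perp}$.

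The crux is the identity $\C_m=(\C_{max})_0$. On one hand $\C_m$ is doubly-even and contained in $\C_{max}$, so every codeword of $\C_m$ has weight divisible by four and hence $\C_m\subseteq(\C_{max})_0$. On the other hand $(\C_{max})_0$ is itself a doubly-even self-orthogonal code containing $\C_m$, so maximality of $\C_m$ among doubly-even self-orthogonal codes forces equality. This is the step I expect to carry the argument; once it is in place, the dimension bookkeeping is immediate, since Theorem~\ref{doubly-even} says $\dim(\C_{max})_0$ equals $\dim\C_{max}$ exactly when $\C_{max}$ is already doubly-even, and $\dim\C_{max}-1$ otherwise.

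With $\C_m=(\C_{max})_0$, we obtain $k\in\{n/2,\,n/2-1\}$ for $n$ even and $k\in\{(n-1)/2,\,(n-3)/2\}$ for $n$ odd. To finish the even case I would split on $n\bmod 8$: a doubly-even $\C_{max}$ of dimension $n/2$ is a doubly-even self-dual code, which by the cited fact exists if and only if $n\equiv0\pmod 8$. Thus for $n\equiv0\pmod 8$ both values $n/2$ and $n/2-1$ can occur, whereas for $n\equiv2,4,6\pmod 8$ the code $\C_{max}$ can never be doubly-even, forcing $k=n/2-1$. For odd $n$ there is no such self-dual obstruction, so both $(n-1)/2$ and $(n-3)/2$ survive as the possibilities, matching the claimed statement. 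The only genuinely delicate point is the identification $\C_m=(\C_{max})_0$, which reconciles the two different notions of maximality; everything after it is routine case analysis.
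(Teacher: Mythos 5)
Your argument is correct and follows the same route the paper intends: the paper gives no written proof beyond the remark that the corollary ``follows from Corollary~\ref{F2} and Theorem~\ref{doubly-even}'' together with the fact that doubly-even self-dual codes exist only for $n\equiv 0\pmod 8$, and your identification $\C_m=(\C_{max})_0$ is exactly the missing detail that makes that one-line justification work. The subsequent dimension count and the mod-$8$ case split coincide with what the paper relies on.
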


		\begin{cor} \label{cor4}
			Suppose that $\C$ is a self-orthogonal code of length $n$ over $\Z_4$ with type $4^{k_1}2^{k_2}$, then there exists a binary maximal  doubly-even self-orthogonal code $\C_m$ with parameters $[n, k_1' \ge k_1]$ which contains ${\mbox{Res}}(\C)$ as a subcode. When $n$ is even,
			the dimension of $\C_m$ is $\frac{n}{2} $ or $ \frac{n}{2} -1$. When $n$ is odd,  $\C_m$ has dimension $ \frac{n-1}{2} $ {\rm{or}} $\frac{n-3}{2}$.
		\end{cor}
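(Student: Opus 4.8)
The plan is to push the entire question down to the binary world through the residue code and then cash in the dimension count for maximal doubly-even self-orthogonal binary codes that was already established in Corollary~\ref{doubly-even self-orthogonal code with certain dimension}. First I would record that ${\mbox{Res}}(\C)$ is a binary doubly-even self-orthogonal code of dimension exactly $k_1$. This is forced by self-orthogonality over $\Z_4$: for $\boldsymbol{c}\in\C$ one has $\boldsymbol{c}\cdot\boldsymbol{c}\equiv 0\pmod 4$, and since $x^2\equiv 1\pmod 4$ for odd $x$ and $x^2\equiv 0\pmod 4$ for even $x$, the quantity $\boldsymbol{c}\cdot\boldsymbol{c}$ counts the odd coordinates of $\boldsymbol{c}$ modulo $4$, i.e. the Hamming weight of its reduction $\overline{\boldsymbol c}$; hence every codeword of ${\mbox{Res}}(\C)$ has weight divisible by $4$. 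Reducing $\boldsymbol{c}\cdot\boldsymbol{d}\equiv 0\pmod 4$ modulo $2$ likewise shows ${\mbox{Res}}(\C)$ is self-orthogonal, and from the standard form its dimension is $k_1$.

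Next I would embed ${\mbox{Res}}(\C)$ into a \emph{maximal} doubly-even self-orthogonal binary code by the same greedy argument used in Lemma~\ref{original lemma} and Corollary~\ref{binary maximal self-orthogonal}, but carried out inside the class of doubly-even codes. If the current doubly-even self-orthogonal code $\mathcal D$ (starting with $\mathcal D={\mbox{Res}}(\C)$) is not maximal among doubly-even self-orthogonal codes, I pick any doubly-even vector $\boldsymbol{v}$ orthogonal to $\mathcal D$ with $\boldsymbol{v}\notin\mathcal D$ and adjoin it. The only point to check is that $\langle\mathcal D,\boldsymbol{v}\rangle$ stays doubly-even: for binary $\boldsymbol{u},\boldsymbol{v}$ one has $\mathrm{wt}(\boldsymbol{u}+\boldsymbol{v})=\mathrm{wt}(\boldsymbol{u})+\mathrm{wt}(\boldsymbol{v})-2\,|\,\mathrm{supp}(\boldsymbol{u})\cap\mathrm{supp}(\boldsymbol{v})\,|$, and orthogonality forces the overlap to be even, so the last term is a multiple of $4$; thus the weight stays divisible by $4$. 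Since every element of the enlarged code is either some $\boldsymbol{d}$ or some $\boldsymbol{d}+\boldsymbol{v}$ with $\boldsymbol{d}\in\mathcal D$ and $\boldsymbol{v}\perp\boldsymbol{d}$, all weights remain divisible by $4$. Because each step raises the dimension by one and a binary self-orthogonal code has dimension at most $\lfloor n/2\rfloor$ (Corollary~\ref{F2}), the process terminates in a maximal doubly-even self-orthogonal code $\C_m\supseteq{\mbox{Res}}(\C)$, so its dimension $k_1'$ automatically satisfies $k_1'\ge k_1$.

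Finally I would read off the dimension of $\C_m$ directly from Corollary~\ref{doubly-even self-orthogonal code with certain dimension}: for $n\equiv 2,4,6\pmod 8$ it equals $n/2-1$, for $n\equiv 0\pmod 8$ it equals $n/2$ or $n/2-1$, and for odd $n$ it equals $(n-1)/2$ or $(n-3)/2$. Collapsing the three even residue classes gives ``$n/2$ or $n/2-1$'' for even $n$ and ``$(n-1)/2$ or $(n-3)/2$'' for odd $n$, which is exactly the claim. The hard part, to the extent there is one, is the doubly-even bookkeeping in the extension step: one must perform the enlargement within doubly-even codes (not merely self-orthogonal ones), which is precisely what the weight identity above guarantees. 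Everything else is a direct appeal to the already-proved Corollaries~\ref{F2} and~\ref{doubly-even self-orthogonal code with certain dimension}.
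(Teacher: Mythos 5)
Your proposal is correct and follows essentially the same route as the paper: observe that ${\mbox{Res}}(\C)$ is a binary doubly-even self-orthogonal code, embed it in a maximal such code, and read off the dimension from Corollary~\ref{doubly-even self-orthogonal code with certain dimension}. The paper's own proof is just these two sentences; you have merely filled in the details it leaves implicit (why ${\mbox{Res}}(\C)$ is doubly-even, and the weight identity showing the greedy enlargement stays doubly-even), and those details are all sound.
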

		
		\begin{proof}
			Since $\C$ is self-orthogonal, then Res$(\C)$ is a binary doubly-even self-orthogonal code. The result followes from Corollary \ref{doubly-even self-orthogonal code with certain dimension}.
		\end{proof}
		
		\begin{thm} \label{maximal dimension}
			When $n$ is even, we can choose $k_1'$ in Theorem~\ref{thm-ext-2} at most $\frac{n}{2}$. When $n$ is odd, we can choose $k_1'$ at most $ \frac{n-1}{2} $.
		\end{thm}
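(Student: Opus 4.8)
The plan is to recognize that $k_1'$ in Theorem~\ref{thm-ext-2} is precisely the dimension of the binary doubly-even self-orthogonal code $\mathcal A$ used to expand $\C$, and then to bound $\dim\mathcal A$ by the maximal dimension of a self-orthogonal subspace of $\F_2^n$ established earlier in the paper.

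First I would note that, by the hypotheses of Theorem~\ref{thm-ext-2}, the code $\mathcal A$ is a binary doubly-even self-orthogonal code of length $n$ with $\dim\mathcal A = k_1'$. In particular, $\mathcal A$ is self-orthogonal, i.e.\ $\mathcal A \subseteq \mathcal A^{\perp}$, so its dimension cannot exceed that of a maximal self-orthogonal subspace of $\F_2^n$.

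Next I would invoke Corollary~\ref{F2}, which gives the dimension of a maximal self-orthogonal subspace of $\F_2^n$ as $n/2$ when $n$ is even and $(n-1)/2$ when $n$ is odd. Combining this with the previous observation yields $k_1' = \dim\mathcal A \le n/2$ for $n$ even and $k_1' \le (n-1)/2$ for $n$ odd, which is exactly the asserted bound.

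I expect no serious obstacle: the statement is essentially a direct consequence of the self-orthogonal dimension bound, once one observes that every doubly-even self-orthogonal code is in particular self-orthogonal. The only subtlety worth flagging is that $k_1'$ must be an integer, so in the odd case the inequality $\dim\mathcal A\le n/2$ automatically tightens to $(n-1)/2$. If one also wishes to certify that these values are attainable, one may appeal to Corollary~\ref{cor4}, which guarantees a binary maximal doubly-even self-orthogonal code $\C_m$ containing $\text{Res}(\C)$ of dimension $n/2$ or $n/2-1$ when $n$ is even, and $(n-1)/2$ or $(n-3)/2$ when $n$ is odd; choosing $\mathcal A=\C_m$ then realizes the largest admissible $k_1'$.
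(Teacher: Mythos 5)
Your proof is correct and follows essentially the same route as the paper: the paper cites Corollary~\ref{cor4} (a maximal doubly-even self-orthogonal code containing $\text{Res}(\C)$ has dimension at most $\tfrac{n}{2}$ or $\tfrac{n-1}{2}$), which itself rests on the same self-orthogonal dimension bound of Corollary~\ref{F2} that you invoke directly. Your added remarks on integrality in the odd case and on attainability via Corollary~\ref{cor4} are consistent with, and slightly more explicit than, the paper's one-line argument.
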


		\begin{proof}
			Let $\C_m$ be a maximal doubly-even self-orthogonal code containing Res$(\C)$ described in Corollary \ref{cor4}. Then dim$(\C_m)$ is at most $\frac{n}{2}~{\rm{or}}~  \frac{n-1}{2}$ depending on the parity of $n$.
		\end{proof}

\section{New self-dual codes over $\Z_4$ of lengths $27, 28, 29, 33,$ and $34$}

In this section, we construct new self-dual codes $\Z_4$ of lengths $27, 28, 29, 33,$ and $34$ with the highest Euclidean weights. These results are achieved by applying the Algorithm \ref{algorithm1} and Algorithm
\ref{algorithm2}.

  In \cite{Ha}, the author gave some optimal self-dual codes with large lengths $n$ from $26$ to $45$ (not consecutive) over $\Z_4$ with respect to  minimum Euclidean weight. We apply our algorithms to some of them and find  five new optimal self-dual codes with respect to the Euclidean weight. 
  We show the parameters of the new codes  and distinguish our codes with the codes in~\cite{Ha} by different Lee weight distribution from Magma~\cite{magma} in Table \ref{new optimal self-dual codes}. We display their generator matrices given below Table \ref{new optimal self-dual codes}.  Those new self-dual codes are obtained by the following procedure:
  \begin{enumerate}
  	\item [(i)] Consider the submatrix $G$ consisting of the first $k_1-2$ rows of the generator matrix~\cite[Figure 1]{Ha} with a given length $n$. Let $\C_1$ be the self-orthogonal code generated by this matrix.
  	\item [(ii)] Expand $\C_1$ into many self-dual codes $\C_2$ such that Res$(\C_1) \subsetneq$ Res$(\C_2)$ and Res$(\C_2)=k_1$ by adding two more rows to $G$. For example,  the two rows between the two horizontal lines in $G_{27}^4$ are chosen carefully.
  	\item [(iii)] Expand every $\C_2$ into many self-dual codes until find an optimal one.
  \end{enumerate}

In Table \ref{new optimal self-dual codes}, $d_E$ and $d_L$ denote the minimum Euclidean weight and minimum Lee weight of the code, respectively. The notation $A_i^L$ is the number of the codewords with Lee weight $i$. The notation $A_i^E$ is the number of the codewords with Euclidean weight $i$.

  \begin{table}[H]
  \renewcommand{\arraystretch}{1.2}
  	\caption{Five new self-dual codes over $\Z_4$ with the highest Euclidean weight $12$} 
  	\label{new optimal self-dual codes}
  	\centering
  	\begin{tabular}{c|c|c|c|c|cc|c|cccccc}
  		\toprule
  $n$ &		Codes & Type &  Generator matrices &  $d_E$  & $A_{12}^E$ & $A_{16}^E$  &  $d_L$ &  $A_6^L$ & $A_8^L$  & $A_{10}^L$ & $A_{12}^L$  \\
  \hline
 	\multirow{1}{*}{27} &  $\C_{27}^4$  &  $4^72^{13}$  &$G_{27}^4$ &$12$  &$2509$  &
 	$60366$ &$6$   &  $13$  &   $142$  &  $752$   &  $5488$    \\ \cline{1-12}
  \multirow{1}{*}{28} &$\C_{28}^4$ &  $4^72^{14}$ & $G_{28}^4$ & $12$ &  $2240$ & $64827$ & $8$  &   $0$  & $315$ & $0$   & $8288$   \\  \cline{1-12}
 \multirow{1}{*}{29} &$\C_{29}^4$ &  $4^72^{15}$  & $G_{29}^4$ & $12$    &    $1716$
 & 63342 &  $6$     &  $20$ & $206$ & $861$ & $5580$  \\ \cline{1-12}
 \multirow{1}{*}{33}& $\C_{33}^4$ & $4^92^{15}$   & $G_{33}^4$ & $12$ &$625$  & $50322$  &  $6$     &  $9$  & $74$  & $480$ & $2897$ \\ \cline{1-12}
  \multirow{1}{*}{34}&$\C_{34}^4$ & $4^{10}2^{14}$    & $G_{34}^4$ & $12$ & $515$   &$45771$ &  $6$     &  $3$  & $43$  & $294$ & $1929$ \\
  		\toprule
  	\end{tabular}
  \end{table}
		
To justify that our codes in Table~\ref{new optimal self-dual codes} are new, we have compared them with the codes in \cite[Table 2]{Ha} as follows. The code with type $4^72^{13}$ and generator matrix  $M_{27}$ has $d_E =12$, $d_L=6$, $[A_6^L, A_8^L, A_{10}^L, A_{12}^L]=[5, 150, 720, 4944]$ and
$[A_{12}^E, A_{16}^E]=[2629, 59478]$.
The code with type $4^72^{14}$ and generator matrix $M_{28}$ has $d_E =12$, $d_L=8$,
       $[A_6^L, A_8^L, A_{10}^L, A_{12}^L]$$=[0, 315, 0,  7776]$ and
$[A_{12}^E, A_{16}^E]=[1728, 64827]$.
The code with type $4^72^{15}$ and generator matrix $M_{29}$ has $d_E =12$, $d_L=6$, $[A_6^L, A_8^L, A_{10}^L, A_{12}^L]=[24, 178, 937, 5464]$	and
$[A_{12}^E, A_{16}^E]=[1688, 62770]$.
The code with type $4^92^{15}$ and generator matrix $M_{33}$ has $d_E =12$, $d_L=6$, $[A_6^L, A_8^L, A_{10}^L, A_{12}^L]=[9, 66, 456, 2865]$ and
$[A_{12}^E, A_{16}^E]=[689, 50058]$.
The code with type $4^{10}2^{14}$ and generator matrix $M_{34}$ has $d_E =12$, $d_L=6$,  $[A_6^L, A_8^L, A_{10}^L, A_{12}^L]=[1, 39, 282, 1797]$ and
$[A_{12}^E, A_{16}^E]=[521, 45975]$.

$${\scriptsize
  \renewcommand{\arraystretch}{0.7}
G_{27}^4= \left[ \begin{array}{c}
100000010011100010113320130\\
010000001011010100013333030\\
001000001010101010002131131\\
000100000010010111112303103\\
000010011110010010101330201\\
\hline
002201211111010101100000110\\
202000310001011010110110110\\
\hline
200000000000000000002002002\\
020000000000000000002222222\\
002000000000000000000002220\\
000200000000000000002222002\\
000020000000000000002220220\\
000002000000000000000202202\\
000000200000000000000020222\\
000000020000000000000000202\\
000000002000000000000202020\\
000000000200000000002022022\\
000000000020000000000022020\\
000000000002000000000002200\\
000000000000020000002002020\\
\end{array}\right],~
G_{28}^4= \left[ \begin{array}{c}
1000000001101001010102330111  \\
0100000000110100101011211211  \\
0010000100011010010101123101  \\
0001000010001101001011132130   \\
0000100101000110100100331031  \\
\hline
2002012011010111100010011100  \\
0202221110100100011000101111  \\
\hline
2000000000000000000000020202  \\
0200000000000000000002000022  \\
0020000000000000000002200002  \\
0002000000000000000002200020  \\
0000200000000000000000222000  \\
0000020000000000000000022200  \\
0000002000000000000000002220  \\
0000000200000000000002020222 \\
0000000020000000000002220000   \\
0000000002000000000000220002  \\
0000000000200000000002022000  \\
0000000000020000000000222220  \\
0000000000002000000000020220  \\
0000000000000200000000002022  \\
\end{array}\right]
}
$$

	$$
{\scriptsize
  \renewcommand{\arraystretch}{0.7}
	G_{29}^4= \left[ \begin{array}{c}
10000001110001000001111103030  \\
01000001101100110101002001303   \\
00100001100001001101112030103   \\
00010001101110010010002110213   \\
00001000110111000111010010212   \\
\hline
00222301001010101110111110111  \\
00002030011011100000101111100  \\
\hline
20000000000000000000002022220  \\
02000000000000000000000220022  \\
00200000000000000000000202002  \\
00020000000000000000002200000  \\
00002000000000000000002000222  \\
00000200000000000000002222022  \\
00000020000000000000002200202  \\
00000002000000000000002022222  \\
00000000200000000000002200022  \\
00000000020000000000002222200  \\
00000000002000000000002002202  \\
00000000000200000000000020200  \\
00000000000020000000000222002   \\
00000000000002000000002020202  \\
00000000000000200000000202202 \\
\end{array} \right],
}
$$

  \renewcommand{\arraystretch}{0.7}
{\scriptsize
$G_{33}^4=\left[ \begin{array}{c}
100000000100101011010001020213213  \\
010000000110010100011100203101010  \\
001000000011111010001001010333131  \\
000100000110111000110111200023000  \\
000010000001010111111011113223001  \\
000001000101011111111100312212322  \\
000000100001001111100011302030032  \\
\hline
222022030110011001100000010101011  \\
002200023110101001100111000110000  \\
\hline
200000000000000000000002022022000 \\
020000000000000000000002202022020  \\
002000000000000000000002002022202  \\
000200000000000000000002202020022  \\
000020000000000000000002200022020  \\
000002000000000000000002002022002  \\
000000200000000000000000002000022  \\
000000020000000000000002220022220  \\
000000002000000000000002020002222  \\
000000000200000000000000220022002  \\
000000000020000000000002202000202  \\
000000000002000000000002202022202  \\
000000000000200000000000202022002  \\
000000000000020000000000020002002  \\
000000000000000200000002002022020 \\
\end{array} \right],
 G_{34}^4=\left[ \begin{array}{c}
1000000000010110100101010311033313 \\
0100000000001001110101000113202213  \\
0010000000011010010101003100021323  \\
0001000000000111011101110221331112  \\
0000100000101000101001001031101030  \\
0000010000011111010001100322300012  \\
0000001000110010100111111103122323  \\
0000000100111010010111112213011100   \\
\hline
0002002032101001101101111001101011   \\
2222222201110111011111001111111101  \\
\hline
2000000000000000000000000220020022  \\
0200000000000000000000000002022200  \\
0020000000000000000000020222022220  \\
0002000000000000000000000022220222  \\
0000200000000000000000000000222022   \\
0000020000000000000000020022222200  \\
0000002000000000000000020222002020  \\
0000000200000000000000020202020022   \\
0000000020000000000000020222222202   \\
0000000002000000000000020000200222  \\
0000000000200000000000000202222000 \\
0000000000002000000000000222202022  \\
0000000000000002000000020220002202  \\
0000000000000000002000000200020220 \\
		\end{array} \right].$
}	

		\begin{table}[htbp]
		\caption{Minimum norms and kissing numbers} 
		\label{minimum norms and kissing numbers}
		\centering
		\begin{tabular}{ccc}
			\toprule
		$L$ &	$\mu(L)$ & $N(L)$ \\
			\midrule 
		$A_4(\C_{27}^4)$    &   $3$   & $2664$  \\
		$A_4(\C_{28}^4)$    &   $3$   & $2240$  \\
		$A_4(\C_{29}^4)$    &   $3$   & $1856$  \\
		$A_4(\C_{33}^4)$    &   $3$   & $704$   \\
		$A_4(\C_{34}^4)$    &   $3$   & $544$  \\
			\toprule
		\end{tabular}
	\end{table}
	
	In Table \ref{minimum norms and kissing numbers}, we list the minimum norms $\mu(L)$ and the kissing numbers $N(L)$ of  lattices $L= A_4(\C_n^4)$ constructed from $\C_n^4$ given in Table \ref{new optimal self-dual codes} by Construction $A_4$ in Section 2.
		Note that the kissing numbers in Table \ref{minimum norms and kissing numbers} are different from those in~\cite{Ha} for $n=28, 33, 34$. In~\cite[p. xliv]{Sphere Packings}, it is remarked that there are exactly three Type I lattices in dimension 27 and 38 Type I lattices in dimension 28. For $n=29$ only one Type I lattices in dimension 29 is known~\cite{NebSlo}. Hence our Type I lattices $A_4(\C_{27}^4)$, $A_4(\C_{28}^4)$, and $A_4(\C_{29}^4)$ are alternative lattices in dimensions 27, 28, and 29. Our lattice $A_4(\C_{33}^4)$ have the same kissing number as the one in ~\cite{NebSlo}. However our lattice $A_4(\C_{34}^4)$ seems to be new since the unimodular lattice in dimension 34 denoted by 34MIN3 in \cite{NebSlo} has a kissing number $560$ and this was the only known one from~\cite[Table 2]{Gab2004} by 2004. 

		\section{Conclusion}
		In this paper, we have succeeded in expanding any self-orthogonal code to many self-dual codes over  $\Z_4$. We have explored various methods for expansion over the ring $\Z_4$, focusing on maintaining the dimension of the residue code or increasing it to its maximum capacity for a given free or non free  self-orthogonal $\Z_4$ code. Using Magma~\cite{magma}, we have recovered all indecomposable self-dual codes with lengths from $4$ to $8$ by expanding self-orthogonal codes. When it comes to longer lengths, we constructed five new optimal self-dual codes over $\Z_4$ of length $27, 28, 29, 33, 34$ in Table \ref{new optimal self-dual codes}. 
	Our research not only demonstrates the feasibility of expanding a self-orthogonal code into many self-dual codes over the ring $\Z_4,$ but also provides explicit and comprehensive algorithms. We
	expect that three algorithms will provide more interesting new self-dual codes.

\bigskip

		\begin{appendices}

		\end{appendices}
		


\noindent
{\Large {\bf Appendix}}

		\begin{algorithm}[H]\label{algorithm1}
			\SetAlgoLined
			\KwIn{A self-orthogonal code $\C$ over $\Z_4$.}
			\textbf{Step 1:} Obtain the type $\left\{ k_1,k_2 \right\} $, the length $n$, generator matrix $G$ of $\C$.

			\While{$\C$ is not self-dual}{
				\textbf{Step 2:} Add vector to $\C$ of the form $2\textbf{c}$, where $\textbf{c} \in {\mbox{Res}}(\C)^{\bot} ~\backslash ~{\mbox{Tor}}(\C)$.

				\textbf{Step 3:} Let $G'=\left[ \begin{array}{c}
					G\\
					2\textbf{c}\\
				\end{array} \right] $ and generate a new self-orthogonal code $\C'$.

				\textbf{Step 4:} Let $\C=\C'$.
				
			}
			\KwOut{ A self-dual code of type $\left\{ k_1,n-2k_1 \right\}$ and its generator matrix is $G'$. }
			\caption{How to expand a self-orthogonal code into a self-dual code over $\Z_4$ and keep the value of $k_1$ (based on Theorem \ref{thm-ext-1}) }
		\end{algorithm}

		\begin{algorithm}[H]
			\SetAlgoLined   \label{algorithm2}
			\KwIn{A self-orthogonal code $\C$ over $\Z_4$.}
			\textbf{Step 1:} Obtain the type $k_1 $, the length $n$, generator matrix $G$ of $\C$, and let
			\begin{equation*}
				G=\left[ \begin{matrix}
					I_{k_1}&		A&		B_1+2B_2
				\end{matrix} \right].
			\end{equation*} \\
			\textbf{Step 2:} Find a binary doubly-even code $\mathcal{C}_{de}$ with dimension $k_1'>k_1$ which contains Res$(\C)$.

			\textbf{Step 3:} Write the generator matrix of $\mathcal{C}_{de}$ as
				\begin{equation*}
				G_2=\left[ \begin{matrix}
					I_{k_1}&		A&		B_1\\
					\bf{0}&		I_{k_1'-k_1}&		C\\
				\end{matrix} \right],
			\end{equation*}
			where the first $k_1$ rows form the generator matrix for Res$(\C)$.

			\textbf{Step 4:} Apply Lemma \ref{lem-binquar} to transform $G_2$ into many matrices over $\Z_4$, and choose one suitable matrix $G_3$  in the form of 	\begin{equation*}
			G_3=\left[ \begin{matrix}
					I_{k_1}&		A&		B_1+2B_2\\
					F&		E&		C\\
				\end{matrix} \right].
			\end{equation*} \\
		\textbf{Step 5:} Let $G_3$ generate a self-orthogonal code $\C'$ over $\Z_4$.
		
		\textbf{Step 6:} Apply Algorithm \ref{algorithm1} to obtain a self-dual code $\C_{sd}$ containing $\C'$.
			




				
				\KwOut{A self-dual code with the type  $\{ k_1', n-2k_1'\}$.}
				
			\caption{How to expand a free self-orthogonal code $\C$ in a standard form into many dfifferent self-dual codes over $\Z_4$ (based on Theorem \ref{thm-ext-2} ) }
		\end{algorithm}

		\begin{algorithm}[H]\label{algorithm3}
			\SetAlgoLined
			\KwIn{A self-orthogonal code $\C$ over $\Z_4$.}
			\textbf{Step 1:} Obtain the type $\left\{ k_1,k_2 \right\} $, the length $n$, generator matrix $G$ of $\C$.
				$$
			G=\left[ \begin{matrix}
				I_{k_1}&		A&	      	B_1+2B_2\\
				\bf{0}      &        2I_{k_2}&    2C     \\
			\end{matrix} \right].
			$$\\
			\textbf{Step 2:} Find a binary double-even code $\C_{de}$ with dimension $k_1'>k_1$ such that  ${\text{Res}}(\C) \subseteq \C_{de}$ and $\C_{de} \subseteq {\text{Tor}}(\C)^\perp$, and write the generator matrix of $\C_{de}$ as
			\begin{equation*}
				G_2=\left[ \begin{matrix}
					I_{k_1}&		A&		B_1\\
					\bf{0}&		I_{k_1'-k_1}&		D\\
				\end{matrix} \right].
			\end{equation*}\\
			\textbf{Step 3:} Apply Lemma \ref{lem-binquar} to transform $G_2$ into many  matrices over $\Z_4$ and choose one suitable matrix $G_3$  in the form of 	\begin{equation*}
			G_3=\left[ \begin{matrix}
				I_{k_1}&		A&		B_1+2B_2\\
				F&		E&		C\\
			\end{matrix} \right].
		\end{equation*} \\
			\textbf{Step 4:}   Define $G_4$ as follows to generate a new self-orthogonal code $\C'$ over $\Z_4$:
				\begin{equation*}
				G_4=\left[ \begin{matrix}
					I_{k_1}&		A&		B_1+2B_2\\
					F&		E&		C\\
				 \bf{0}&        2I_{k_2}&    2C     \\
				\end{matrix} \right].
			\end{equation*}\\

\textbf{Step 5:} Apply Algorithm \ref{algorithm1} to obtain a self-dual code $\C_{sd}$ containing $\C'$.


			\KwOut{a self-dual code with the type  $\{ k_1', n-2k_1'\}$. }
								
				\caption{How to expand a self-orthogonal code $\C$ in a standard form into many different  self-dual codes over $\Z_4$ (based on Theorem \ref{thm-ext-3}) }
			\end{algorithm}

		\end{document}